\documentclass[a4paper,onecolumn, 11pt, accepted=2023-10-25]{quantumarticle}
\pdfoutput=1

\usepackage{graphicx}
\usepackage{amsmath}
\usepackage{amssymb}
\usepackage{amsthm}
\usepackage{algpseudocode}
\usepackage{algorithm}
\usepackage{hyperref}
\usepackage{comment}
\usepackage{subcaption}
\usepackage{xcolor}
%\setlength{\textheight}{47\baselineskip}
%\setlength{\textheight}{\baselinestretch\textheight}
%\addtolength{\textheight}{\topskip}

\newcommand{\be}{\begin{equation}}
\newcommand{\ee}{\end{equation}}
\newcommand{\ba}{\begin{array}}
\newcommand{\ea}{\end{array}}
\newcommand{\bea}{\begin{eqnarray}}
\newcommand{\eea}{\end{eqnarray}}

\newcommand{\ra}{\rangle}
\newcommand{\la}{\langle}

\newcommand{\calA}{{\cal A }}

\newcommand{\calS}{{\cal S }}

\newcommand{\EE}{\mathbb{E}}

\newcommand{\RR}{\mathbb{R}}

\newcommand{\od}{\mathsf{od}}
\newcommand{\ket}[1]{|#1\rangle}
\newcommand{\bra}[1]{\langle #1|}
\newcommand{\braket}[2]{\langle #1|#2\rangle}
\newcommand{\Exp}{\mathop\mathbb{E}}

\newcommand{\Var}{\mathop{\text{Var}}}
\newcommand{\cov}{\text{cov}}
\newcommand{\cor}{\text{cor}}

\newtheorem{lemma}{Lemma}
\newtheorem{fact}{Fact}

\newtheorem{theorem}{Theorem}

\newcommand{\footremember}[2]{%
    \footnote{#2}
    \newcounter{#1}
    \setcounter{#1}{\value{footnote}}%
}
\newcommand{\footrecall}[1]{%
    \footnotemark[\value{#1}]%
}
\begin{document}

\title{A rapidly mixing Markov chain from any gapped quantum many-body system}

\author{Sergey Bravyi}
\affiliation{IBM Quantum, IBM T.J. Watson Research Center, Yorktown Heights, USA}

\author{Giuseppe Carleo}
\affiliation{\'{E}cole Polytechnique F\'{e}d\'{e}rale de Lausanne (EPFL), Institute of Physics, CH-1015 Lausanne, Switzerland}

\author{David Gosset}
\affiliation{Department of Combinatorics and Optimization and Institute for Quantum Computing, University of Waterloo}
\affiliation{Perimeter Institute for Theoretical Physics, Waterloo, Canada}

\author{Yinchen Liu}
\affiliation{Department of Combinatorics and Optimization and Institute for Quantum Computing, University of Waterloo}
\affiliation{Perimeter Institute for Theoretical Physics, Waterloo, Canada}
\begin{comment}
\author{  Sergey Bravyi \footremember{ibm}{IBM Quantum, IBM T.J. Watson Research Center, Yorktown Heights, USA} \and
Giuseppe Carleo \footremember{epfl}{\'{E}cole Polytechnique F\'{e}d\'{e}rale de Lausanne (EPFL), Institute of Physics, CH-1015 Lausanne, Switzerland}\and
David Gosset \footremember{iqc}{Institute for Quantum Computing, University of Waterloo, Canada}\footremember{co}{Department of Combinatorics and Optimization, University of Waterloo, Canada}\footremember{pi}{Perimeter Institute for Theoretical Physics, Canada}
   \and Yinchen Liu \footrecall{iqc} \footrecall{co} \footrecall{pi}
  }
\end{comment}

\maketitle

\begin{abstract}

We consider the computational task of sampling a  bit string $x$ from a 
distribution $\pi(x)=|\la x|\psi\ra|^2$,  where $\psi$ is the unique ground state of a local Hamiltonian $H$. Our main result describes a direct link between the inverse spectral gap of $H$ and the mixing time of an associated continuous-time Markov Chain with steady state $\pi$. The Markov Chain can be implemented efficiently whenever 
ratios of ground state amplitudes $\la y|\psi\ra/\la x|\psi\ra$ are efficiently computable, the spectral gap of $H$ is at least inverse polynomial in the system size, 
 and the starting state of the chain satisfies a mild technical condition that can be efficiently checked. This extends a previously known relationship between sign-problem free Hamiltonians and Markov chains. The tool which enables this generalization is the so-called fixed-node Hamiltonian construction, previously used in 
Quantum Monte Carlo simulations to address the fermionic sign problem. 
We implement the proposed sampling algorithm numerically and use it to sample from the ground state of Haldane-Shastry Hamiltonian with up to $56$ qubits.
We observe empirically that our Markov chain based on the fixed-node Hamiltonian
mixes more rapidly than the standard Metropolis-Hastings Markov chain.
 \end{abstract}

The task of generating samples from a given probability distribution 
underlies almost all randomized algorithms used in computational physics, machine learning, and optimization.  In many applications the target distribution is \textit{efficiently computable} in the sense that there is a polynomial-time subroutine that computes the relative probabilities of any two given elements. A paradigmatic example is the (classical) Boltzmann distribution $\pi(x)\sim e^{-E(x)/T}$ associated with an efficiently computable energy function $E(x)$ and a temperature $T$. In this case the ratio of Boltzmann probabilities
 $\pi(y)/\pi(x)$ is simply related to the energy difference $E(y)-E(x)$. The fundamental obstacle in such cases is that \textit{distributions that are efficiently computable may still be challenging to sample from}. In particular, there are many examples when generating a sample from an efficiently computable distribution is an NP-hard problem.
This includes low-temperature Boltzmann distributions with energy function $E(x)$ that is given by a 3-SAT formula or an Ising spin glass~\cite{barahona1982computational} and certain distributions described by
neural networks of RBM type~\cite{long2010restricted}. 

It is therefore interesting to ask:

\vspace{0.3cm}
\noindent\textbf{Q1}:\textit{Which efficiently computable distributions can also be efficiently sampled?}
\vspace{0.3cm}

\noindent and, more broadly, 

\vspace{0.3cm}
\noindent\textbf{Q2}:\textit{Which distributions admit an efficient reduction from sampling to computing probabilities?}
\vspace{0.2cm}

Here we note that a polynomial-time reduction between the two tasks may exist for a broader family of distributions, some of which may not be efficiently computable.

A natural way to address these questions is to use Markov Chain Monte Carlo (MCMC), an empirically successful and versatile algorithmic tool for sampling probability distributions. MCMC methods work by constructing a Markov chain $M$ such that the target distribution $\pi$ is the unique steady distribution of $M$. A distribution $\pi_t$ generated after implementing  $t$ steps of the chain $M$ approximates
the steady distribution $\pi$ provided that the number of steps $t$ is large compared with the mixing time
of  $M$. A well known sampling algorithm in this category is the Metropolis-Hastings Markov chain \cite{hastings_monte_1970} and its variations. Although MCMC methods are widely used in practice, their main limitation is the difficulty of obtaining rigorous  upper bounds on the mixing time of Markov chains. 
Such bounds can be established only in certain special cases using techniques such as 
the canonical paths method, coupling of Markov chains, or the conductance bound, see e.g.~\cite{levin2017markov}.

In this work we are interested in variants of the questions \textbf{Q1} and \textbf{Q2} for probability distributions originating from ground states of quantum many-body systems. Can we design specialized MCMC sampling algorithms that exploit their structure? Are efficiently computable distributions \textit{that arise from ground states} efficiently samplable? 

In particular, we consider a system of $n$ qubits with few-qubit interactions described
by a {$k$-local} Hamiltonian 
$H=\sum_{a=1}^m H_a$,
where $H_a$ are Hermitian operators acting non-trivially on subsets of at most $k=O(1)$ qubits.
The Hamiltonian may or may not be local in the geometric sense.
For example, a $2$-local Hamiltonian can describe a chain of qubits with long-range two-qubit interactions. 
We choose the energy scale such that $\|H_a\|\le 1$ for all $a$.
Let $\psi$ be the ground state of $H$. We assume that the ground state of $H$
is non-degenerate and separated from excited states by an energy gap $\gamma>0$.
Our goal is to sample a bit string $x\in \{0,1\}^n$ from the ground state distribution 
\be
\label{pi(x)}
\pi(x)=|\la x|\psi\ra|^2.
\ee
It can be viewed as the zero-temperature quantum analogue of a classical Boltzmann distribution 
describing classical spins with few-spin interactions. 

To accomplish this goal one may construct a suitable quantum-to-classical mapping that converts
a quantum Hamiltonian with the ground state $\psi$ to a classical Markov chain $M$
with the steady distribution $\pi(x)=|\la x|\psi\ra|^2$. A possible choice for $M$ is a
Metropolis-Hastings (MH) Markov chain with local updates. In a simple, typical setting, each step of the MH chain flips a randomly chosen bit (or a subset of bits) of $x$ to propose a candidate next state $y$. 
The proposed state $y$ is accepted with the probability $\min\{1,\pi(y)/\pi(x)\}$
to ensure the detailed balance condition. As was shown in Ref.~\cite{bravyi2021simulate}, the mixing time
of the MH chain can be upper bounded as 
\begin{equation}
T\le \tilde{O}(n^k s/\gamma),
\label{eq:mixtime}
\end{equation}
where $\tilde{O}$ notation hides certain logarithmic factors and $s$ is a {\em sensitivity}
parameter defined as
$s=\max_{x\ne y} |\la y|H|x\ra \la y|\psi\ra/\la x|\psi\ra|$. For the widely studied family of \textit{sign-problem free}\footnote{A Hamiltonian $H$ with real matrix elements is sign-problem free if $\langle x|H|y\rangle \leq 0$ for all $x\neq y$.} Hamiltonians the sensitivity parameter can be bounded as \cite{bravyi2021simulate} 
\begin{equation}
s\leq 2\|H\|  \quad \qquad \text{(if $H$ is sign-problem free)}.
\label{eq:sens}
\end{equation}
For a local Hamiltonian of the type we consider we have $\|H\|=O(\mathrm{poly}(n))$ and so Eqs.~(\ref{eq:mixtime},\ref{eq:sens}) give a polynomial upper bound on the mixing time of the MH chain. Since each step of the chain makes use of a ratio $\pi(y)/\pi(x)$, this constitutes an efficient reduction from sampling to computing (ratios of) probabilities. In other words, we obtain a positive answer to $\textbf{Q2}$ for sign-problem free Hamiltonians \footnote{We note that a different efficient reduction from sampling to computing probabilities was previously known for this family of Hamiltonians \cite{aharonov2003adiabatic, bravyiterhal}.}. If we further specialize to frustration-free and sign-problem free Hamiltonians then the requisite ratios of probabilities can be computed efficiently~\cite{bravyiterhal} and we get a partial answer to $\textbf{Q1}$ as well. 

Unfortunately, for more general local Hamiltonians---those that may not be sign-problem free--- it is unknown whether the sensitivity $s$ admits a polynomial upper bound in terms of $n$ and $1/\gamma$, and significant differences may thwart this approach altogether. In the sign-problem free case the Perron Frobenius theorem implies that the ground state amplitudes $\langle x|\psi\rangle$ are nonnegative in the computational basis. For more general local Hamiltonians the entries of the ground state wavefunction may have nontrivial relative phases, but the MH chain 
ignores this information. Indeed, this chain only depends on the ratios of probabilities $\pi(y)/\pi(x)$. For these reasons it appears unlikely that the MH chain can be shown to  mix rapidly for general  local Hamiltonians with an inverse polynomial spectral gap. 

To proceed, we use a quantum-to-classical mapping
based on a proposal from Ref.~\cite{ten1995proof} which was originally introduced to circumvent the fermionic sign problem in Quantum Monte Carlo simulations \cite{foulkes2001quantum, becca2017quantum}.
This construction maps any $k$-local $n$-qubit Hamiltonian $H$ with real matrix elements,
unique ground state $\psi$, and a spectral gap $\gamma$
to a new $n$-qubit Hamiltonian $F$---defined in Eq.~\eqref{fixed_node_def}---such that: (i) $F$ and $H$ have the same ground state $\psi$,
(ii) the spectral gap of $F$ is at least $\gamma$, and (iii) the Hamiltonian $F$
is sign-problem free, modulo a simple basis change, see Section \ref{subs:markov_chain} for details. 
Following Ref.~\cite{ten1995proof}  we refer to $F$ as a {\em fixed-node} Hamiltonian.
Importantly, matrix elements of $F$ are efficiently computable given an efficient subroutine
for computing the ratio of amplitudes 
\begin{equation}
\qquad \qquad \qquad \qquad \qquad  x,y\longrightarrow \la y|\psi\ra/\la x|\psi\ra.   \qquad (\textbf{Amplitude computation subroutine}) \nonumber
\end{equation}
We are thus led to consider natural variants of the questions $\textbf{Q1},\textbf{Q2}$ where probabilities are replaced by amplitudes. This allows us to exploit the information encoded in the amplitudes' relative phase, which is an important feature of the ground state. We note that the model where a quantum state
can be accessed via the amplitude computation subroutine has been recently studied in~\cite{havlicek2022amplitude}.

Since the fixed node Hamiltonian is sign-problem free, we can sample the ground state distribution $\pi(x)$
using the MH chain and 
upper bound the mixing time of the chain using Eqs.~(\ref{eq:mixtime},\ref{eq:sens}). Unfortunately this is not useful because, as we shall see below, the norm of the fixed-node Hamiltonian can be unbounded. 

Instead, we introduce a quantum-to-classical mapping that yields a {\em continuous-time}  Markov chain rather than a regular
discrete-time chain. Recall that a continuous-time Markov chain (CTMC) with a state space $\calS$
defines a family of probability
distributions $\pi_t(x)$ where $t\ge 0$ is the evolution time
and $x\in \calS$ is the state reached at time $t$.
The time evolution of $\pi_t(x)$ is governed by a differential equation
\be
\label{CTMC1}
\frac{d}{dt} \pi_t(x) = \sum_{y\in \calS} \la x|G|y\ra \pi_t(y), 
\ee
where $G$ is a {\em generator matrix}. Rows and columns of $G$ are labeled by elements of $\calS$.
A matrix element $\la x|G|y\ra$ with $x\ne y$ can be viewed as the rate
of transitions from $y$ to $x$. Accordingly, all off-diagonal
elements of $G$ must be non-negative. The normalization  condition $\sum_{x\in \calS} \pi_t(x)=1$
is satisfied for all $t\ge 0$ as long as each column of $G$ sums to zero. 
A solution of Eq.~(\ref{CTMC1}) has the form 
$\pi_t(x)=\la x|e^{Gt} |x_{in}\ra$, 
where $x_{in}\in \calS$ is the starting state at time $t=0$ and $e^{Gt}$ denotes the matrix
exponential. 

Our CTMC is defined by a generator $G$ which is a suitably rescaled version of the fixed-node Hamiltonian $F$
associated with $H$. By design, it has a steady distribution $\pi(x)=\lim_{t\to \infty} \pi_t(x)$. It is given by
\be
\label{Gxy}
\la x|G|y\ra = \max\{0, - \la x|H|y\ra \la x|\psi\ra/\la y|\psi\ra\}
\ee
for $x\ne y$. Here and below we assume for simplicity that 
\begin{equation}
\pi(x)>0 \qquad \text{for all} \qquad x\in \{0,1\}^n.
\label{eq:pos}
\end{equation}
 Note that Eq.~\eqref{Gxy} also determines the diagonal matrix elements of $G$, since each column of $G$ sums to zero (due to the normalization condition).
Our main results are as follows. 
\begin{theorem}[\bf Rapid mixing]
Let  $H$ be a $k$-local  $n$-qubit Hamiltonian with 
real matrix elements in the standard basis,  unique ground state $\psi$,
and a spectral gap $\gamma$. 
Then a continuous-time Markov chain  with the state space $\{0,1\}^n$ and a
generator matrix $G$ defined in Eq.~(\ref{Gxy})
has a unique steady distribution $\pi(x)=|\la x|\psi\ra|^2$ and
obeys
\begin{equation}
\| \pi_t - \pi\|_1\le \frac{e^{-\gamma t}}{\sqrt{\pi(x_{in})}}
\label{eq:dist_t}
\end{equation}
for any $t\ge 0$ and any starting state $x_{in}\in \{0,1\}^n$.
Here
 $\pi_t(x)=\la x|e^{Gt} |x_{in}\ra$ is the distribution achieved by the Markov chain at a time $t$.
\label{thm:main1}
\end{theorem}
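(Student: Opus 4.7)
The plan is to realize $G$ as a similarity transform of $E_0 I - F$, where $F$ is the fixed-node Hamiltonian associated with $H$ and $E_0$ denotes their common ground energy, and then run the standard spectral-gap analysis for a reversible continuous-time Markov chain.

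First I would establish the identity $G = \Psi(E_0 I - F)\Psi^{-1}$ with $\Psi = \mathrm{diag}(\la x|\psi\ra)$, which is invertible by Eq.~\eqref{eq:pos}. For off-diagonals ($x\ne y$) this is an entry-by-entry check: the fixed-node construction sets $F_{xy}=\la x|H|y\ra$ whenever $\la x|H|y\ra\la x|\psi\ra\la y|\psi\ra\le 0$ and $F_{xy}=0$ otherwise, so $-F_{xy}\la x|\psi\ra/\la y|\psi\ra$ coincides with $\max\{0,-\la x|H|y\ra\la x|\psi\ra/\la y|\psi\ra\}=\la x|G|y\ra$. For the diagonal, $F\psi=E_0\psi$ together with Hermiticity of $F$ forces every column of $\Psi(E_0 I-F)\Psi^{-1}$ to sum to zero, matching how the diagonal of $G$ is fixed by the CTMC normalization.

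With $G$ identified as a similarity transform of $E_0 I-F$, its spectrum is real: the top eigenvalue is $0$ (simple, since the ground state $\psi$ of $F$ is non-degenerate) with right eigenvector proportional to $\pi$, and every other eigenvalue is $\le -\gamma$ by property (ii) of the fixed-node construction. This already proves uniqueness of $\pi$ as the steady state. I would next verify detailed balance directly from Eq.~\eqref{Gxy}: for $x\ne y$, both $\la x|G|y\ra\pi(y)$ and $\la y|G|x\ra\pi(x)$ simplify to $\max\{0,-\la x|H|y\ra\la x|\psi\ra\la y|\psi\ra\}$. Consequently $\tilde G := \Pi^{-1} G \Pi$ with $\Pi = \mathrm{diag}(\sqrt{\pi(x)})$ is symmetric, has the same spectrum as $G$, and has unit-norm top eigenvector $u_0 = \Pi\mathbf{1}$.

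The mixing bound now follows from a standard spectral calculation. Set $v_t := \Pi^{-1}\pi_t$, so $v_t = e^{\tilde G t}v_0$ with $v_0 = \delta_{x_{in}}/\sqrt{\pi(x_{in})}$. A direct computation gives $\la u_0|v_0\ra = 1$, so expanding $v_0$ in the eigenbasis of $\tilde G$ and applying the $-\gamma$ spectral gap yields $\|v_t - u_0\|_2 \le e^{-\gamma t}\sqrt{\|v_0\|_2^2 - 1}\le e^{-\gamma t}/\sqrt{\pi(x_{in})}$. Finally, $\pi_t - \pi = \Pi(v_t - u_0)$, so Cauchy--Schwarz gives $\|\pi_t-\pi\|_1 = \sum_x\sqrt{\pi(x)}\,|(v_t-u_0)(x)|\le \|v_t-u_0\|_2$, which is Eq.~\eqref{eq:dist_t}. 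The main substantive step is the first one: correctly identifying $G$ with $\Psi(E_0 I-F)\Psi^{-1}$ by matching the $\max\{0,\cdot\}$ in $G$ against the ``keep non-frustrating entries'' rule of the fixed-node construction; everything downstream is textbook reversible-chain analysis, lifted to continuous time.
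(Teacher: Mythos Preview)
Your proposal is correct and follows essentially the same route as the paper: identify $G=\Psi(E_0I-F)\Psi^{-1}$, invoke the fixed-node spectral gap $\gamma_F\ge\gamma$, and bound the $\ell_1$ distance by the $\chi^2$-type quantity via Cauchy--Schwarz plus the spectral decomposition of the symmetrized generator. The only cosmetic difference is that you symmetrize with $\Pi=\mathrm{diag}(\sqrt{\pi})$ and track $\|v_t-u_0\|_2$ directly, whereas the paper symmetrizes with $D=\mathrm{diag}(\psi)$ and reaches the same bound through $\la x_{in}|e^{2Mt}|x_{in}\ra/\pi(x_{in})$; both are the standard reversible-chain argument of Diaconis--Stroock.
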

As we show below (Lemma \ref{lem:real}), the restriction to Hamiltonians $H$ with
real matrix elements is not essential and can be avoided by adding one ancillary qubit. 
Theorem \ref{thm:main1} shows that we may approximately sample from $\pi$ by running the continuous-time Markov chain for a total time $T\sim \gamma^{-1}\log(\pi(x_{in})^{-1})$. However it is not immediately clear how to simulate this process using resources polynomial in $T$, because of the significant caveat that the norm of $G$ may be large. This may lead to many transitions of the chain occuring in a very short time, and it prevents us from approximating the continuous-time chain by a discrete-time one obtained by naively discretizing the interval $[0,T]$.

Our saving grace is that we are able to establish a mild upper bound on the \textit{mean} number of transitions of the chain $G$ within a given interval, when the starting state is sampled from the steady distribution. This allows us to directly simulate the Markov chain of Theorem~\ref{thm:main1} using a truncated version of the well-known
Gillespie's algorithm~\cite{gillespie1977exact} in which we impose an upper limit on the total number of transitions of the chain. In this way we obtain the following
result.

\begin{theorem}[\bf Ground state sampling]
Let $\pi^*=\min_x \pi(x)$.
There exists a classical randomized algorithm that takes as input
a precision  $\epsilon>0$,
a starting state $x_{in} \in \{0,1\}^n$, makes at most 
\[
T=\frac{O( \epsilon^{-1} n^{2k} \| H\|)}{\gamma} \log{\left(\frac1{ \epsilon \pi^*}\right)}
\]
calls to the amplitude computation subroutine,
and either outputs a bit string $y\in \{0,1\}^n$
or declares an error. Let $S_{\epsilon}\subseteq \{0,1\}^n$  be the set of starting states $x_{in}$ for which the algorithm declares an error with probability at most $\epsilon/4$. The set $S_{\epsilon}$ is nonempty. Moreover, if the algorithm is run with starting state $x_{in}\in S_{\epsilon}$ and does not declare an error, then its output $y$ is sampled from a distribution $\epsilon$-close to $\pi$.
\label{thm:main}
\end{theorem}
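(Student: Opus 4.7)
The plan is to realise the continuous-time chain of Theorem~\ref{thm:main1} via a truncated variant of Gillespie's algorithm. Starting from $x_{in}$, we repeatedly (i) enumerate the $O(n^k)$ neighbours $y$ with $\langle y|H|x\rangle\ne 0$ and compute the rates $G(y,x)$ using the amplitude subroutine, (ii) draw a holding time from the exponential distribution with parameter $R(x)=\sum_{y\ne x} G(y,x)$, and (iii) jump to $y$ with probability $G(y,x)/R(x)$. We run the simulation until either the elapsed time reaches $T_{\mathrm{evol}}:=\Theta(\gamma^{-1}\log(1/(\epsilon \pi^*)))$, in which case the current state is output, or the number of jumps hits a threshold $N$, in which case we declare an error. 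With this choice of $T_{\mathrm{evol}}$, Theorem~\ref{thm:main1} gives $\|\pi_{T_{\mathrm{evol}}}-\pi\|_1\le \epsilon/4$ for any $x_{in}$, since $\pi(x_{in})\ge \pi^*$. A standard conditioning identity then shows that the algorithm's output conditioned on no error is within total variation distance $\epsilon/4 + O(\Pr[\text{error}])$ of $\pi$, which is $O(\epsilon)$ provided $\Pr[\text{error}]\le \epsilon/4$.

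The crux is to make $N$ small enough for the advertised runtime yet large enough that the error probability stays small for many starting states. Individual rates $G(y,x)$ can be enormous---recall that the very motivation for going to continuous time was that the fixed-node Hamiltonian $F$ has unbounded norm---so a worst-case bound is useless. The key observation is that these large rates occur at states where $\pi(x)$ is small, so their contribution is suppressed after weighting by the stationary distribution. Using that the amplitudes are real,
\begin{equation}
\pi(x)\,G(y,x) \;=\; \max\bigl\{0,\,-\langle y|H|x\rangle\,\langle y|\psi\rangle\,\langle x|\psi\rangle\bigr\} \;\le\; |\langle y|H|x\rangle|\cdot|\langle x|\psi\rangle|\cdot|\langle y|\psi\rangle|,
\end{equation}
in which the denominator $\langle x|\psi\rangle$ of $G(y,x)$ has cancelled one factor of $\langle x|\psi\rangle$ from $\pi(x)$. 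Summing over $x,y$, applying the AM--GM inequality $|ab|\le(a^2+b^2)/2$, and using Hermiticity together with $k$-locality to bound $\sum_y|\langle y|H|x\rangle|$ by $\mathrm{poly}(n)\|H\|$, one obtains $\Exp_{x\sim\pi}[R(x)] = O(n^k\|H\|)$.

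Since $\pi$ is stationary for $G$, the expected number of jumps in $[0,T_{\mathrm{evol}}]$ when the start state is drawn from $\pi$ equals $T_{\mathrm{evol}}\cdot \Exp_{\pi}[R]$. Choose $N$ to be a constant multiple of $T_{\mathrm{evol}}\,\Exp_\pi[R]/\epsilon$. Markov's inequality applied to the nonnegative function $f(x)=\Pr_x[\text{number of jumps in }[0,T_{\mathrm{evol}}]>N]$, which has $\Exp_\pi[f]\le \epsilon/8$, shows that the set $S_\epsilon$ of starting states with $f(x)\le \epsilon/4$ has $\pi$-mass at least $1/2$; in particular it is nonempty. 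For any $x_{in}\in S_\epsilon$ the algorithm errs with probability at most $\epsilon/4$, and conditional on success it outputs a sample $\epsilon$-close to $\pi$ by the first paragraph. The total number of calls to the amplitude subroutine is $N\cdot O(n^k) = O(\epsilon^{-1} n^{2k}\|H\|\gamma^{-1}\log(1/(\epsilon\pi^*)))$, matching the claim.

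The central difficulty is the expected-rate bound $\Exp_\pi[R]=O(n^k\|H\|)$: individual rates may blow up at states with tiny amplitude, and only the precise cancellation between the $1/\langle x|\psi\rangle$ factor in $G(y,x)$ and the $\langle x|\psi\rangle^2$ factor in $\pi(x)$ yields a polynomial mean. Once this is in hand, the rest is a routine combination of Gillespie simulation, truncation, and an averaging argument to extract a nonempty set of good starting states; the theorem statement is deliberately existential about $S_\epsilon$ because the argument produces it non-constructively via Markov's inequality.
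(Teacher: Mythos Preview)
Your proposal is correct and follows essentially the same route as the paper: truncate Gillespie's algorithm after a jump threshold, bound the $\pi$-averaged number of jumps by $T_{\mathrm{evol}}\cdot\Exp_\pi[R]$ with $\Exp_\pi[R]=-\langle\psi|F_{\mathrm{od}}|\psi\rangle\le n^k\|H\|$ (your AM--GM computation is precisely the paper's Eq.~(\ref{norm_od})), and then combine Markov's inequality with Theorem~\ref{thm:main1} and a conditioning estimate. The only cosmetic difference is that you apply Markov twice to conclude $\pi(S_\epsilon)\ge 1/2$, whereas the paper applies it once to extract a single good starting state; both yield the nonemptiness of $S_\epsilon$ required by the theorem.
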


The aforementioned caveat that $G$ has large matrix elements is also direcly related to the additional requirement above that we are provided with a good starting state $x_{in}$. Strictly speaking, Theorem \ref{thm:main} falls short of giving an efficient reduction from sampling to computing amplitudes of the ground state, since it requires this extra input. However, a good starting state $x_{in}$ can at least be verified using polynomial resources (and the amplitude computation subroutine): given $x\in \{0,1\}^n$ and $\epsilon>0$ we can decide with high probability whether or not $x\in S_{\epsilon}$ by running the above algorithm $O(\epsilon^{-2})$ times with starting state $x$, and using the results to compute an estimate of the probability that the algorithm declares an error.

In Section \ref{sec:qc} we describe the quantum-to-classical mapping in detail and we prove Theorems \ref{thm:main1} and \ref{thm:main}. Then in Section \ref{sec:hs} we demonstrate our algorithm for a concrete example (the Haldane-Shastry spin chain) where the amplitudes of the ground state are efficiently computable, and we compare our approach with the Metropolis-Hastings algorithm.

\section{Quantum-to-classical mapping\label{sec:qc}}
In this section we prove Theorems \ref{thm:main1} and \ref{thm:main}. 

In the following, for any matrix $M$ with real eigenvalues we write $\lambda_i(M)$ for the $i$-th smallest eigenvalue of $M$.  If $M$ is an $n$-qubit operator then $i=1,2,\ldots,2^n$. As described in the introduction, we shall consider an $n$-qubit, $k$-local Hamiltonian with a unique ground state $|\psi\rangle$ and we are interested in sampling from the distribution $\pi(x)=|\langle x|\psi\rangle|^2$. We shall assume Eq.~\eqref{eq:pos} holds, i.e., $|\langle x|\psi\rangle| >0$ for all $x\in \{0,1\}^n$.
\subsection{The continuous-time Markov chain}
\label{subs:markov_chain}

We first establish the following Lemma that shows we may restrict our attention to Hamiltonians with real matrix elements in the standard basis.

\begin{lemma}[\bf Reduction to real Hamiltonians]
Let $H$ be a local Hamiltonian with unique ground state $\psi$ and spectral gap $\gamma>0$, satisfying Eq.~\eqref{eq:pos}. There is a $O(n^k)$-sparse $(n+1)$-qubit Hamiltonian $H_{\mathrm{R}}$ with unique ground state  $|\phi\rangle=\mathrm{Re}(|\psi\rangle)|0\rangle+\mathrm{Im}(|\psi\rangle)|1\rangle$ and spectral gap at least $\mathrm{min}\{1,\gamma\}$. The $j$th nonzero entry of $H_\mathrm{R}$ in a given row can be computed using one call to the amplitude computation subroutine and efficient classical computation. 
\label{lem:real}
\end{lemma}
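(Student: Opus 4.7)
My plan is to carry out the standard $\CC \to \RR$ unfolding of $H$ and then lift the residual two-fold ground-space degeneracy with a sparse penalty that is readable from the amplitude oracle. First I would decompose $H = A + iB$ with $A = \mathrm{Re}(H)$ real symmetric and $B = \mathrm{Im}(H)$ real antisymmetric, and consider the real symmetric candidate
\[
\tilde{H}_R = A\otimes I - B\otimes(i\sigma_y) \qquad \text{on } n+1 \text{ qubits},
\]
where the ancilla operator $i\sigma_y$ is the real antisymmetric $2\times 2$ matrix with entries $(i\sigma_y)_{01}=1$, $(i\sigma_y)_{10}=-1$, encoding multiplication by $i$. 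Writing $|\psi\rangle = |\psi_R\rangle + i|\psi_I\rangle$ and matching real and imaginary parts of $H|\psi\rangle = E_0 |\psi\rangle$ would immediately give $\tilde H_R |\phi\rangle = E_0 |\phi\rangle$. More structurally, the $\pm i$ eigenspaces of $I\otimes(i\sigma_y)$ are invariant under $\tilde H_R$ and the operator acts on them as $H$ and $H^*$ respectively; hence the spectrum of $\tilde H_R$ is that of $H$ with every eigenvalue doubled. The ground space is the two-dimensional real subspace $V = \mathrm{span}(|\phi\rangle, |\phi'\rangle)$, where $|\phi'\rangle = -\mathrm{Im}(|\psi\rangle)|0\rangle + \mathrm{Re}(|\psi\rangle)|1\rangle$ is the partner obtained by replacing $|\psi\rangle$ with $i|\psi\rangle$; the rest of the spectrum lies at distance at least $\gamma$ above $E_0$.

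To lift the degeneracy in favor of $|\phi\rangle$, I would add the penalty $\Pi_\perp = \sum_{y\in\{0,1\}^n} P_y$, where $P_y$ is the rank-one orthogonal projector, supported on $\mathrm{span}(|y,0\rangle, |y,1\rangle)$, onto the unit vector proportional to $-\mathrm{Im}\langle y|\psi\rangle\, |y,0\rangle + \mathrm{Re}\langle y|\psi\rangle\, |y,1\rangle$. The entries of $P_y$ depend only on $\arg\langle y|\psi\rangle$ relative to a fixed reference state $x_0$, so each $2\times 2$ block of $\Pi_\perp$ costs one call to the amplitude computation subroutine. The candidate real Hamiltonian is $H_R = \tilde H_R + \Pi_\perp$.

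The core of the proof will be the gap analysis, which I would carry out via two short identities: $P_y|\phi\rangle = 0$ (direct, since the component of $|\phi\rangle$ in each $2\times 2$ block is orthogonal to the line defining $P_y$) and $\sum_y P_y|\phi'\rangle = |\phi'\rangle$ (the complementary in-block directions telescope back to $|\phi'\rangle$). Consequently $\Pi_\perp$ preserves $V$, and by self-adjointness it is block-diagonal with respect to the decomposition $V\oplus V^\perp$. On $V$ the operator $H_R$ is then diagonal with eigenvalues $E_0$ at $|\phi\rangle$ and $E_0+1$ at $|\phi'\rangle$, while on $V^\perp$ one has $H_R|_{V^\perp} \succeq \tilde H_R|_{V^\perp} \succeq (E_0+\gamma)I$ since $\Pi_\perp \succeq 0$. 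This gives $|\phi\rangle$ as the unique ground state with spectral gap $\min(1,\gamma)$. The sparsity bound would follow by combining the $O(n^k)$ nonzero entries per row of $\tilde H_R$ (inherited from the $k$-locality of $H$) with the at most two nonzero entries per row contributed by $\Pi_\perp$.

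The main obstacle I anticipate is the intrinsic "multiplication by $i$" symmetry of the $\CC \to \RR$ unfolding, which forces the two-fold degeneracy, together with the need to break it using a term that is simultaneously sparse, readable from phase information alone, and sufficiently structured that excited eigenvalues of $\tilde H_R$ are not pulled below $E_0+\min(1,\gamma)$ by the perturbation. Writing $\Pi_\perp$ as a sum of rank-one projectors onto the "phase-orthogonal" direction inside each local $2\times 2$ block $\mathrm{span}(|y,0\rangle, |y,1\rangle)$ is what makes these requirements compatible and, crucially, keeps the whole analysis block-diagonal with respect to $V\oplus V^\perp$.
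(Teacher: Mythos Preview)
Your proposal is correct and follows essentially the same approach as the paper: the real unfolding $\tilde H_R$ is exactly the paper's $H'$, your penalty $\Pi_\perp$ coincides with the paper's term $\sum_x |x\rangle\langle x|\otimes|v_x\rangle\langle v_x|$, and your identities $P_y|\phi\rangle=0$ and $\sum_y P_y|\phi'\rangle=|\phi'\rangle$ are precisely the paper's computations with $\psi_0,\psi_1$. Your gap analysis via the block decomposition $V\oplus V^\perp$ is in fact slightly cleaner than the paper's phrasing (``since $H''\ge H'$, all other eigenvalues of $H''$ are at least $\lambda_2(H)$''), which implicitly relies on the same invariance.
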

\begin{proof}
Suppose $H$ is an $n$-qubit, $k$-local Hamiltonian with a unique ground state $\psi$ such that $\langle x|\psi\rangle\neq 0$ for all $x$. Let us fix the global phase of $\psi$ such that $\langle 0^n|\psi\rangle\in \mathbb{R}$.

We may write
\[
H=A+i(B-B^{T})
\]
for real matrices $A,B$ which have at most $O(n^k)$ entries in each row. Now adjoin one ancilla qubit and consider the Hamiltonian
\[
H'=A\otimes I+(B-B^T)\otimes \left( \begin{array}{cc}
0 & -1 \\
1 &0 
\end{array} \right).
\] 
Note that $H'$ is a real symmetric matrix. Let us suppose that $|\phi\rangle$ is an eigenvector of $H$ with eigenvalue $\lambda$. Write $|\phi\rangle=|a\rangle+i|b\rangle$ where $a,b$ are real vectors, and consider the $n+1$-qubit states
\begin{align}
|\phi_0\rangle&=|a\rangle|0\rangle+|b\rangle|1\rangle\\
|\phi_1\rangle&=-|b\rangle|0\rangle+|a\rangle|1\rangle.
\end{align}
Observe that $|\phi_0\rangle$ is orthogonal to $|\phi_1\rangle$ and that each of these states is an eigenstate of $H'$ with eigenvalue $\lambda$. By letting $|\phi\rangle$ range over all $2^n$ eigenvectors of $H$ we obtain a complete set of $2\cdot 2^n$ eigenvectors for $H'$. In particular, the Hamiltonian $H'$ has the same spectral gap as $H$ but has a groundspace spanned by two eigenstates $\psi_0, \psi_1$.

In order to get rid of this degeneracy while preserving the spectral gap, we add a suitable positive semidefinite term to $H'$. 
For each $x\in \{0,1\}^n$ write the complex phase of $\langle x|\psi\rangle$ as
\[
\frac{\langle x|\psi\rangle}{|\langle x|\psi\rangle|}=\cos(\theta_x)+i\sin(\theta_x)
\]
where $\theta_x\in \mathbb{R}$. Then consider a Hamiltonian
\[
H''=H'+\sum_{x\in \{0,1\}^n} |x\rangle\langle x|\otimes |v_x\rangle \langle v_x| \quad \text{ where } \quad |v_x\rangle=-\sin(\theta_x)|0\rangle+\cos(\theta_x)|1\rangle.
\]
Note that $H''$ has $O(n^k)$ nonzero entries in each row and each one can be computed efficiently using one call to the amplitude computation subroutine\footnote{In particular, the angle $\theta_x$ can be obtained from the complex phase of the ratio $\langle x|\psi\rangle/\langle 0^n|\psi\rangle$.}.

Now observe that 
\[
\langle x|\langle v_x|\psi_0\rangle=-\sin(\theta_x)\mathrm{Re}(\langle x|\psi\rangle)+\cos(\theta_x)\mathrm{Im}(\langle x|\psi\rangle)=0
\]
where we used the definition of $\theta_x$. Therefore $H''|\psi_0\rangle=H'|\psi_0\rangle=\lambda_1(H)|\psi_0\rangle$. Moreover,

\[
\langle x|\langle v_x|\psi_1\rangle=+\sin(\theta_x)\mathrm{Im}(\langle x|\psi\rangle)+\cos(\theta_x)\mathrm{Re}(\langle x|\psi\rangle)=|\langle x|\psi\rangle|
\]
and therefore
\[
\left(\sum_{x\in \{0,1\}^n} |x\rangle\langle x|\otimes |v_x\rangle \langle v_x|\right)|\psi_1\rangle=\sum_{x} |x\rangle|v_x\rangle |\langle x|\psi\rangle|=\sum_{x}|x\rangle (\langle x|\otimes I)|\psi_1\rangle =|\psi_1\rangle
\]
From this we see that $\psi_1$ is an eigenvector of $H''$ with eigenvalue $\lambda_1(H)+1$. Since $H''\geq H'$, all other eigenvalues of $H''$ are at least $\lambda_2(H)$. Therefore $H''$ has unique ground state $\psi_0$ and its spectral gap is at least as large as 
\[
\min\{1, \lambda_2(H)-\lambda_1(H)\}=\min\{1,\gamma\}.
\]
\end{proof}

\begin{proof}[Proof of Theorem \ref{thm:main1}]
Our main technical tool is the so-called  effective  fixed-node  Hamiltonian
proposed by Ceperley et al~\cite{ten1995proof}.
It can be viewed as  a method of ``curing" the sign problem in 
Quantum Monte Carlo simulations. The method  is applicable whenever
amplitudes of the ground state can be efficiently computed.

As discussed above, in light of Lemma \ref{lem:real} we shall assume without loss of generality that $H$ has real matrix elements and a unique ground state with real entries in the standard basis. Define sets 
\begin{equation}
S^+ = \{ (x,y) \, : \, 
x\ne y \quad \mbox{and} \quad
 \la \psi |x\ra \la x|H|y\ra \la y|\psi\ra  > 0\},
\label{eq:fnode1}
\end{equation}
\begin{equation}
S^- = \{ (x,y) \, : \, 
x\ne y \quad \mbox{and} \quad \la \psi |x\ra\la x|H|y\ra \la y|\psi\ra  \le  0
\}.
\label{eq:fnode2}
\end{equation}
Here $x,y\in \{0,1\}^n$ are basis states.
Define a {\em fixed-node} Hamiltonian $F$
with matrix elements
\be
\label{fixed_node_def}
\la x |F|y\ra = \left\{ \ba{rcl}
0&\mbox{if}& (x,y)\in S^+,\\
\la x|H|y\ra &\mbox{if}& (x,y)\in S^-,\\
&&\\
\displaystyle
\la x|H|x\ra + \sum_{(x,z) \in S^+}  \la x|H|z\ra  \frac{ \la z|\psi\ra}{\la x|\psi\ra} &\mbox{if} & x=y.\\
\ea
\right.
\ee
Note that $F$ is stoquastic (sign problem free) 
modulo a change of basis $|x\ra \to \mathrm{sign}(\la x|\psi\ra)|x\ra$.
The following lemma is largely based on Ref.~\cite{ten1995proof}.
\begin{lemma}[\bf Fixed-node Hamiltonian]
\label{lemma:fixed_node}
The Hamiltonians $F$ and $H$ have the same unique ground state $\psi$ and 
the same ground energy. 
The spectral gap of $F$ is at least as large as the one of $H$.
\end{lemma}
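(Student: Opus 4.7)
The plan is to write $F = H + \Delta$ with $\Delta := F - H$ positive semidefinite and satisfying $\Delta\psi = 0$. Once this is in hand, all three claims follow cleanly: $F\psi = H\psi$ identifies $\psi$ as an eigenvector of $F$ with eigenvalue $\lambda_1(H)$, the operator inequality $F\succeq H$ forces $\lambda_1(F)\geq \lambda_1(H)$ so that $\psi$ is a ground state, and a one-line variational argument closes out both uniqueness and the gap bound.

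The concrete decomposition I would use splits $\Delta$ into $2\times 2$ blocks indexed by unordered pairs $\{x,y\}$ for which $(x,y)\in S^+$. For such a pair, setting $h = \la x|H|y\ra$, $a = \la x|\psi\ra$, $b = \la y|\psi\ra$ (all real, by Lemma~\ref{lem:real} and the reduction to a real ground state), the block is
\[
M_{xy} = h\begin{pmatrix} b/a & -1 \\ -1 & a/b \end{pmatrix}
\]
acting on $\mathrm{span}\{|x\ra,|y\ra\}$ and zero elsewhere. The off-diagonal entries $-h$ exactly cancel the $H$ matrix elements zeroed out in the first case of Eq.~\eqref{fixed_node_def}, and summing the diagonal entries over all $y$ paired with a fixed $x$ reproduces the diagonal correction $\sum_{(x,z)\in S^+}\la x|H|z\ra\la z|\psi\ra/\la x|\psi\ra$ in the third case; this confirms $\Delta = \sum_{\{x,y\}:(x,y)\in S^+} M_{xy}$. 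Each block has determinant zero and trace $h(a^2+b^2)/(ab)$, and the defining condition of $S^+$ in Eq.~\eqref{eq:fnode1} is precisely $hab>0$, so the trace is positive and $M_{xy}\succeq 0$. Its one-dimensional kernel is spanned by $a|x\ra + b|y\ra$, i.e.\ the projection of $\psi$ onto $\mathrm{span}\{|x\ra,|y\ra\}$; since $M_{xy}$ vanishes off this span, $M_{xy}\psi = 0$ as an operator on the full space.

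From here the rest is immediate. $F\psi = H\psi + \Delta\psi = \lambda_1(H)\psi$, so $\lambda_1(H)$ is an eigenvalue of $F$. For any unit vector $\phi\perp\psi$, $\la\phi|F|\phi\ra \geq \la\phi|H|\phi\ra \geq \lambda_1(H)+\gamma$, using $\Delta\succeq 0$ and the variational principle for $H$ combined with uniqueness of $\psi$ as its ground state. By the variational characterization of eigenvalues this shows simultaneously that $\psi$ is the unique ground state of $F$ with eigenvalue $\lambda_1(H)$ and that $\lambda_2(F)\geq \lambda_1(F)+\gamma$.

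The only step requiring genuine care is the bookkeeping that $\sum_{\{x,y\}:(x,y)\in S^+} M_{xy}$ really matches $F-H$ on every entry. Positivity of each block is not a computation so much as a reflection of how $S^+$ was engineered: the sign condition $hab>0$ built into Eq.~\eqref{eq:fnode1} is exactly what forces each $M_{xy}$ to be PSD, and this is the entire point of the fixed-node construction.
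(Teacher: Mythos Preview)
Your proof is correct and is essentially the same as the paper's: the paper verifies $F\psi=H\psi$ by direct computation and then shows $\la\phi|F-H|\phi\ra\ge 0$ via a sum-of-squares identity over $(x,y)\in S^+$, which is exactly the quadratic-form statement of your rank-one $2\times 2$ block decomposition of $\Delta$. Your packaging is slightly tidier in that both $\Delta\succeq 0$ and $\Delta\psi=0$ are read off from the same decomposition, but the mathematical content and the concluding variational argument match the paper's.
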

\begin{proof}
First we claim that 
\be
\label{eq1}
F|\psi\ra = H|\psi\ra.
\ee
Indeed,
\[
\la x|F|\psi\ra =\left(\la x|H|x\ra +  \sum_{(x,z) \in S^+}  \la x|H|z\ra  \frac{ \la z|\psi\ra}{\la x|\psi\ra}\right)
\la x|\psi\ra+
 \sum_{(x,y)\in S^-} \la x|H|y\ra  \la y|\psi\ra=\la x|H|\psi\ra
\]
for any basis state $x$ which proves Eq.~(\ref{eq1}).
Next we claim  that 
\be
\label{eq2}
\la \phi|F|\phi\ra \ge \la \phi|H|\phi\ra \quad \mbox{for any state $\phi$}.
\ee
Indeed, 
\[
\la \phi | F - H|\phi\ra
=\sum_{(x,y)\in S^+}  \la x|H|y\ra \left( \frac{\la y|\psi\ra |\la x|\phi\ra|^2}{\la x|\psi\ra} -
 \la \phi|x\ra \la y|\phi\ra\right)
\]
Using the definition of $S^+$ one gets
\[
\la \phi | F- H|\phi\ra
=\sum_{(x,y)\in S^+}  |\la x|H|y\ra| \left( \frac{|\la y|\psi\ra| \cdot |\la x|\phi\ra|^2}{|\la x|\psi\ra|} -
  s(x,y) \la \phi|x\ra \la y|\phi\ra \right),
 \]
 where $s(x,y) =\mathrm{sign}(\la x|H|y\ra)$.
This is equivalent to
\[
\la \phi | F - H|\phi\ra=\frac12 \sum_{(x,y)\in S^+}  |\la x|H|y\ra| \left| \sqrt{\frac{|\la y|\psi\ra|}{|\la x|\psi\ra|}} \la x|\phi\ra - s(x,y)
  \sqrt{\frac{|\la x|\psi\ra|}{|\la y|\psi\ra|}} \la y|\phi\ra\right|^2.
\]
In particular, $\la \phi | F - H|\phi\ra\ge 0$.
We claim that the smallest and second-smallest eigenvalues of $H$ and $F$ satisfy
\be
\label{eq3}
\lambda_1(F)=\lambda_1(H) \quad \mbox{and} \quad \lambda_2(F)\ge \lambda_2(H).
\ee
Indeed,
from Eq.~(\ref{eq1}) one infers $\lambda_1(F)\le \la \psi|F|\psi\ra =
\la \psi|H|\psi\ra=\lambda_1(H)$. Suppose $\phi$ is a ground state of $F$. 
From Eq.~(\ref{eq2}) one gets $\lambda_1(F)=\la \phi|F|\phi\ra \ge \la \phi|H|\phi\ra \ge \lambda_1(H)$.
Thus $\lambda_1(H)=\lambda_1(F)$ and $\psi$ is a ground state of both $H$ and $F$.
Let $\phi$ be an eigenvector of $F$ orthogonal to $\psi$ such that $F|\phi\ra=\lambda_2(F)|\phi\ra$.
Since $H$ has the unique ground state $\psi$, one has
$\lambda_2(H)\le \la \phi|H|\phi\ra \le \la \phi|F|\phi\ra=\lambda_2(F)$.
Here the second inequality uses Eq.~(\ref{eq2}).
Thus $\lambda_2(F)\ge \lambda_2(H)$.
\end{proof}

The next step is to convert the fixed-node Hamiltonian $F$ to a Markov chain 
with the state space $\{0,1\}^n$ such that the ground state distribution $\pi$
is the unique steady state of the  chain.  One technical difficulty that
prevents us from applying the standard quantum-to-classical mapping
commonly used in Quantum Monte Carlo simulations
(see e.g. Section~8 of Ref.~\cite{aharonov2003adiabatic})
 is that the norm of $F$ may be unbounded.
Indeed,  a diagonal matrix element $\la x|F|x\ra$ depends on quantities
$\la x|H|z\ra \la z|\psi\ra/\la x|\psi\ra$ which can be arbitrarily large
even if the original Hamiltonian $H$  has a bounded norm.
Instead, we shall convert $F$ to a generator matrix describing a 
continuous-time Markov chain. We will see that the latter can be simulated efficiently using
the well-known Gillespie's algorithm~\cite{gillespie1977exact}. 
In the proof of Theorem \ref{thm:main} below we establish that the average number of iterations in Gillespie's algorithm 
can be bounded by a quantity that depends only on the spectral gap of $F$
and a certain ``off-diagonal norm" of $F$ which is at most $n^k \|H\|$
even if diagonal matrix elements of $F$ are unbounded.

Define a continuous-time Markov chain with the state space $\{0,1\}^n$
 as a family of stochastic
matrices of the form $e^{Gt}$, where we use the matrix exponential,
$t\ge 0$ is the evolution time, and $G$ is a   {\em generator matrix}
of size $2^n\times 2^n$.
By definition, the probability that the chain evolved for time $t$
makes a transition from a state $x$ to a state $y$ is given by 
$\la y|e^{Gt}|x\ra$. Equivalently, the probability of a transition
from $x$ to $y\in \{0,1\}^n\setminus \{x\}$ between time $t$ and $t+dt$ in the limit $dt\to 0$
is given by $\la y|G|x\ra dt$.
A valid generator matrix $G$ must have
non-negative off-diagonal matrix elements and each column of $G$
must sum to zero. This ensures that $e^{Gt}$ is a stochastic matrix
for any $t\ge 0$. A probability distribution $\eta$
is a steady state of the chain  iff
\be
\label{steady_state}
\sum_x \la y|G |x\ra \eta(x) = 0
\ee
for any  $y$. Equivalently, $\sum_x \la y|e^{Gt} |x\ra \eta(x)=\eta(y)$ for all $y$ and all $t\ge 0$.

Let $F$ be the fixed-node Hamiltonian constructed above.
Define a generator matrix $G$ such that 
\be
\label{G}
\la x|G|y\ra =  \lambda_1(F) \delta_{x,y} -  \la x|F|y\ra \frac{\la x|\psi\ra}{\la y|\psi\ra}
\ee
for all states $x,y$. Using Eq.~\eqref{fixed_node_def} we see that $G$ can equivalently be expressed as in Eq.~\eqref{Gxy}. Note that $\lambda_1(F)$ can be efficiently computed
by making $O(n^k)$ calls to the amplitude computation subroutine. 
Indeed, Lemma~\ref{lemma:fixed_node} implies that $\lambda_1(F)=\lambda_1(H)$.
Furthermore, 
\[
\lambda_1(H) = \frac{\la x|H|\psi\ra}{\la x|\psi\ra} = \sum_y \la x|H|y\ra\cdot  \frac{\la y|\psi\ra}{\la x|\psi\ra}
\]
for any basis state $x$. It remains to note that 
each row of $H$ has at most $O(n^k)$ non-zeros.

The following lemma implies that $G$ is a valid generator matrix
for a continuous-time Markov chain with the unique steady state $\pi$.
\begin{lemma}[\bf Generator matrix]
\label{lemma:gen}
The generator matrix $G$ has real non-positive eigenvalues.
Its  largest and second-largest eigenvalues are $0$ and $-\gamma_F$
respectively, where $\gamma_F$ is the spectral gap of the fixed-node
Hamiltonian $F$. The matrix exponential $e^{Gt}$
is a stochastic matrix for any $t\ge 0$. The distribution $\pi$ is the unique
steady state of $e^{Gt}$.
\end{lemma}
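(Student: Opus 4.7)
The plan is to observe that $G$ is obtained from $F$ by a diagonal similarity transform together with a shift, and then to deduce all four claims from this single structural fact.

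First I would write
\[
G = \lambda_1(F)\, I - D\, F\, D^{-1}, \qquad D = \mathrm{diag}\bigl(\langle x|\psi\rangle : x\in\{0,1\}^n\bigr),
\]
which one reads off Eq.~(\ref{G}). The assumption $\pi(x)>0$ guarantees $D$ is invertible, so $G$ is similar to the real symmetric matrix $\lambda_1(F) I - F$ (recall $H$ and hence $F$ are real; moreover symmetry of $F$ follows because $S^+$ and $S^-$ are symmetric under $x\leftrightarrow y$ since $H$ is symmetric and $\langle\psi|x\rangle \langle x|H|y\rangle\langle y|\psi\rangle$ is symmetric). Consequently $G$ has real eigenvalues $\lambda_1(F)-\lambda_i(F)$, the largest being $0$ (simple, since $F$ has a unique ground state) and the second largest being $-\gamma_F$. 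This handles the spectral claims and uniqueness of the $0$-eigenvector.

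Next I would verify that $G$ is a bona fide generator matrix. Non-negativity of off-diagonal entries: for $x\ne y$, $\langle x|G|y\rangle = -\langle x|F|y\rangle\,\langle x|\psi\rangle/\langle y|\psi\rangle$. If $(x,y)\in S^+$ then $\langle x|F|y\rangle=0$ and the entry is $0$; if $(x,y)\in S^-$ then $\langle x|F|y\rangle=\langle x|H|y\rangle$ and by definition of $S^-$ the product $\langle x|\psi\rangle\,\langle x|H|y\rangle\,\langle y|\psi\rangle\le 0$, which after dividing by the positive quantity $\langle y|\psi\rangle^2$ gives $\langle x|G|y\rangle\ge 0$. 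Columns summing to zero: using that $\psi$ is real with $F|\psi\rangle=\lambda_1(F)|\psi\rangle$ (Lemma~\ref{lemma:fixed_node}),
\[
\sum_x \langle x|G|y\rangle = \lambda_1(F) - \frac{1}{\langle y|\psi\rangle}\sum_x \langle \psi|x\rangle \langle x|F|y\rangle = \lambda_1(F) - \frac{\langle \psi|F|y\rangle}{\langle y|\psi\rangle} = 0.
\]
These two facts are exactly what is needed to conclude that $e^{Gt}$ is a stochastic matrix for every $t\ge 0$ (non-negativity of entries from the off-diagonal condition, preservation of column sums from the kernel condition).

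Finally, I would check directly that $\pi$ is a steady state. For any $x$,
\[
\sum_y \langle x|G|y\rangle\,\pi(y) = \lambda_1(F)\langle x|\psi\rangle^2 - \langle x|\psi\rangle \sum_y \langle x|F|y\rangle\,\langle y|\psi\rangle = \lambda_1(F)\langle x|\psi\rangle^2 - \langle x|\psi\rangle\cdot\lambda_1(F)\langle x|\psi\rangle = 0,
\]
using $F|\psi\rangle=\lambda_1(F)|\psi\rangle$ once more. Uniqueness of $\pi$ as steady distribution follows from simplicity of the $0$-eigenvalue established in the first step. There is no serious obstacle here; the only ``trick'' is to recognize the diagonal similarity $G \sim \lambda_1(F)I - F$, after which everything reduces to spectral facts about the symmetric matrix $F$ that were already established in Lemma~\ref{lemma:fixed_node}.
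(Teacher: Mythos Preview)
Your proposal is correct and follows essentially the same approach as the paper: both recognize the diagonal similarity $G = D(\lambda_1(F)I - F)D^{-1}$ (your form $\lambda_1(F)I - DFD^{-1}$ is equivalent), derive the spectral claims from it, then verify the generator properties and the steady-state condition directly using $F|\psi\rangle = \lambda_1(F)|\psi\rangle$. If anything, you supply slightly more detail than the paper does (e.g., the explicit case analysis for non-negativity of off-diagonal entries and the remark on why $F$ is symmetric), but the argument is the same.
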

\begin{proof}
Let $D$ be a diagonal matrix such that $\la x|D|x\ra = \la x|\psi\ra$.
By definition, $G=D(\lambda_1(F) I - F)D^{-1}$.
Thus eigenvalues of $G$ coincide with eigenvalues of $\lambda_1(F) I - F$.
This proves the first and the second claims of the lemma. 

Let us check that $e^{Gt}$ is a stochastic matrix. 
It follows directly from the definitions that 
$G$ has non-negative off-diagonal elements.
Thus  all matrix elements of $e^{Gt}$ are non-negative.
Any column of $G$ sums to zero since
\[
\sum_x \la x|G|y\ra = \lambda_1(F) - \frac{\la \psi|F|y\ra}{\la y|\psi\ra}=
\lambda_1(F) - \frac{\lambda_1(F) \la \psi |y\ra}{\la \psi |y\ra} = 0.
\]
Here we noted that $\la \psi | F=\lambda_1(F)\la \psi|$ due to Lemma~\ref{lemma:fixed_node}.
This implies that any column of $e^{Gt}$ sums to one,
that is, $e^{Gt}$ is a stochastic matrix. 
Finally,
\begin{align*}
\sum_x \la y|G|x\ra \pi(x) &= \sum_x \la y| D(\lambda_1(F)I -F)D^{-1} |x\ra \pi(x)
 = \la y| \psi\ra \cdot \la y | \lambda_1(F)I -F|\psi\ra = 0
\end{align*}
for any state $y$. Thus $\pi$ is a steady state. This is a unique
steady state since the zero eigenvalue of $G$ is non-degenerate. 
\end{proof}

To complete the proof of Theorem \ref{thm:main1} we now establish the bound Eq.~\eqref{eq:dist_t}. 

Our proof strategy closely follows Ref.~\cite{diaconis1991geometric}, see Proposition~3 thereof.
First we claim the Markov chain $e^{Gt}$ obeys the detailed balance condition
\be
\label{db}
\pi(z)\la y|e^{Gt} |z\ra = \pi(y) \la z|e^{Gt}|y\ra
\ee
for all states $y,z$. Indeed, let $D$ be a diagonal matrix such that $\la z|D|z\ra = \la z|\psi\ra$
for all $z$.
We have 
\[
G=DMD^{-1} \quad \mbox{where} \quad M:=\lambda_1(F) I -F.
\]
Note that $M$ is a symmetric matrix since the fixed-node Hamiltonian $F$ is symmetric.
The identity $e^{Gt} = D e^{Mt} D^{-1}$ gives
\[
\pi(z)\la y|e^{Gt} |z\ra =\la y|\psi\ra  \la z|\psi\ra \la y|e^{Mt}|z\ra.
\]
Clearly this expression is symmetric under the exchange of $y$ and $z$, which
proves Eq.~(\ref{db}).
Using Cauchy-Schwartz one gets
\be
\label{mixing_eq1}
\| \pi_t - \pi\|_1^2 =\left( \sum_y \sqrt{\frac{\pi(y)}{\pi(y)}} |\pi_t(y) - \pi(y)|\right)^2
\le \sum_y \frac1{\pi(y)} |\pi_t(y)-\pi(y)|^2
= -1+\sum_y \frac{(\pi_t(y))^2}{\pi(y)}.
\ee
The detailed balance condition Eq.~(\ref{db})  gives
\be
\label{mixing_eq2}
 \frac{(\pi_t(y))^2}{\pi(y)} = \frac{\la y|e^{Gt}|x\ra^2}{\pi(y)} = \frac{\la x|e^{Gt}|y\ra\la y|e^{Gt} |x\ra}{\pi(x)}
 \ee
 for all $y$. Combining Eqs.~(\ref{mixing_eq1},\ref{mixing_eq2}) gives
\be
\label{mixing_eq3}
\| \pi_t - \pi\|_1^2 \le -1+ \frac{\la x|e^{2Gt}|x\ra}{\pi(x)} = -1+ \frac{\la x|e^{2Mt}|x\ra}{\pi(x)}.
\ee
Here we used an identity $e^{2Gt} = D e^{2Mt} D^{-1}$.
Consider an eigenvalue decomposition
\[
M = \sum_{i=1}^{2^n} \lambda_i(M) |\phi_i\ra\la \phi_i|
\]
such that $\lambda_i(M)$ is the $i$-th smallest eigenvalue of $M$
and $\phi_i$ is the corresponding eigenvector.
We have $\la \phi_i|\phi_j\ra=\delta_{i,j}$.
Since the matrices $M$ and $G$ are related by a similarity transformation,
Lemma~\ref{lemma:gen} implies that 
$\lambda_{2^n}(M)=0$ is the largest eigenvalue of $M$ 
 and $\lambda_{i}(M)\le -\gamma_F$
for all $i<2^n$. Furthermore, Lemma~\ref{lemma:fixed_node} implies $M|\psi\ra=0$, that is,
the only zero eigenvector of $M$ is $\phi_{2^n}=\psi$.
Thus we can write
\[
e^{2Mt} = |\psi\ra\la \psi|  + R, \quad \mbox{where} \quad R:=\sum_{i=1}^{2^n-1} e^{2 \lambda_i(M)t} |\phi_i\ra\la \phi_i|
\]
obeys  $\|R\|\le e^{-2\gamma_Ft}$.
Substituting $e^{2Mt} = |\psi\ra\la \psi|  + R$ into 
Eq.~(\ref{mixing_eq3}) gives
\be
\| \pi_t - \pi\|_1^2 \le -1 + 1+ \frac{\la x|R |x\ra}{\pi(x)} \le \frac{\|R\|}{\pi(x)}\le
 \frac{e^{-2\gamma_F t}}{\pi(x)}.
\label{eq:gfbound}
\ee
It remains to note that $\gamma_F\ge \gamma$, see Lemma~\ref{lemma:fixed_node}.
\end{proof}

\subsection{Sampling algorithm}

Let $x_{in}\in \{0,1\}^n$ be the starting state. 
We would like to sample $x$ from a distribution 
\[
\pi_t(x):=\la x|e^{Gt} |x_{in}\ra.
\]
To this end we use Gillespie's algorithm~\cite{gillespie1977exact}.
The algorithm takes as input the starting state $x_{in}$, evolution time $t\ge 0$,
and returns a sample $x$ from  the distribution $\pi_t(x)$.

\begin{center}
\begin{algorithm}
\caption{Gillespie's algorithm}
\begin{algorithmic}[1]
\State{$x\gets x_{in}$}
\State{$\tau \gets 0$} 
\State{$\xi(0)\gets x_{in}$}
\State{Sample $u \in [0,1]$ from the uniform distribution }
\State{$\Delta \tau \gets  \frac{\log{(1/u)}}{|\la x|G|x\ra|}$}
\State{Set $\xi(s)=x$ for all $s\in (\tau,\tau+\Delta \tau]$} 
\State{$\tau \gets \tau +\Delta \tau$}
\If{$\tau\ge t$}
\State{\textbf{return} $x$}
\EndIf
\State{Sample $y\in \{0,1\}^n\setminus \{x\}$ from the probability distribution $\frac{\la y|G|x\ra}{|\la x|G|x\ra|}$}
\State{$x\gets y$}
\State{\textbf{go to} line~4}
		\end{algorithmic}
\end{algorithm}
\end{center}

Lines~3 and 6 can be safely ignored as far as the implementation is concerned.
However it helps us to prove certain properties of the algorithm.
Namely, each run of the algorithm generates a 
continuous-time random walk on the set of bit strings
described by the piecewise constant function $\xi(s)$ with $s\in [0,t]$.
\begin{fact}[\textbf{Output of Gillespie's algorithm}]
For each $s\in [0,t]$, the random variable $\xi(s)\in \{0,1\}^n$ generated by Gillespie's algorithm is distributed according to $\pi_s(x)$.
\end{fact}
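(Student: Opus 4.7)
The plan is to show that the piecewise-constant path $\xi(\cdot)$ produced by Gillespie's algorithm is a realization of a continuous-time Markov chain with generator $G$, and then invoke the Kolmogorov forward equation to identify its time-$s$ marginal with $\pi_s$.

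First I would verify that the local distributions of waiting times and jumps match those of the target CTMC. Since $u$ is uniform on $[0,1]$, an elementary inverse-CDF computation shows that $\Delta\tau = \log(1/u)/\lambda(x)$, with $\lambda(x) := |\langle x|G|x\rangle|$, is exponentially distributed with rate $\lambda(x)$. Lemma \ref{lemma:gen} and the fact that columns of $G$ sum to zero give $\sum_{y\neq x}\langle y|G|x\rangle = -\langle x|G|x\rangle = \lambda(x)$, so line 11 samples $y \neq x$ according to a valid probability distribution $\langle y|G|x\rangle/\lambda(x)$. Hence, conditional on being at $x$, the process holds there for an $\mathrm{Exp}(\lambda(x))$-distributed time and then jumps to $y$ with probability $\langle y|G|x\rangle/\lambda(x)$; by the memorylessness of the exponential and the independence of the samples drawn across iterations of the loop, $\xi$ is the time-homogeneous CTMC with generator $G$ started at $x_{in}$.

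Having identified the generator, I would let $P_s(x) := \Pr[\xi(s)=x]$ and derive the Kolmogorov forward equation
\begin{equation*}
\frac{d}{ds} P_s(x) = \sum_{y} \langle x|G|y\rangle\, P_s(y), \qquad P_0(x) = \delta_{x,x_{in}},
\end{equation*}
whose unique solution is $P_s(x) = \langle x|e^{Gs}|x_{in}\rangle = \pi_s(x)$. The one place that requires care is the passage from the algorithmic description to the infinitesimal rates: I would check that, conditioned on $\xi(s)=x$, the probability of a transition $x\to y$ during $[s,s+ds]$ equals $\langle y|G|x\rangle\,ds + o(ds)$, obtained by combining $\Pr[\Delta\tau\leq ds] = \lambda(x)\,ds + o(ds)$ with the jump distribution on line 11, and then assemble the forward equation by summing over possible source states $y$. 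The degenerate case $\lambda(x)=0$ (an absorbing state, giving $\Delta\tau = +\infty$ so that line 8 triggers immediately and the returned path satisfies $\xi(s) = x$ for all $s$ in the current interval) is consistent with the equation, since then $\langle y|G|x\rangle = 0$ for all $y\neq x$ as well. This is the only subtle point; everything else is a routine first-step analysis.
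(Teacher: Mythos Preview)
Your proposal is correct and follows essentially the same route as the paper's proof sketch: both verify that the holding time at $x$ is $\mathrm{Exp}(|\langle x|G|x\rangle|)$ and that the jump kernel is $\langle y|G|x\rangle/|\langle x|G|x\rangle|$, then derive the Kolmogorov forward equation $P'(s)=P(s)G$ and solve it as $P(s)=e^{Gs}$. The paper's version makes the short-time expansion more explicit by bounding $\Pr(m(x,h)\ge 2)=O(h^2)$, while you invoke memorylessness and the standard holding-time/jump-chain construction; you also treat the absorbing case $\lambda(x)=0$, which the paper's sketch does not mention.
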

For completeness, we include a proof sketch for the correctness of Gillespie's algorithm.

\begin{proof}[Proof sketch]
The following derivations are based on Ref.~\cite{takahara2017notes}.  Let $P_{y,x}(\tau)=\Pr(\xi(\tau)=y|\xi(0)=x)=\bra{y}P(\tau)\ket{x}$ denote the probability of being in state $y$ at time $\tau$ given that the initial state is $\xi(0)=x$. We claim that $P(\tau)$ satisfies the differential equation $P'(\tau)=GP(\tau)$. Let $h>0$ where we think of $h$ as a small quantity tending to zero. Let $m(x,h)$ be a random variable that is equal to the number of transitions that occur in the time interval $[0,h]$ starting with initial state $x$.  Since the time $\Delta\tau$ at which the first transition occurs is exponentially distributed with rate $|\bra{x}G\ket{x}|$,
\begin{equation}
\Pr(m(x,h)=0)=e^{-|\bra{x}G\ket{x}|h}=1-|\bra{x}G\ket{x}|h+O(h^2).
\label{eq:zerom}
\end{equation}
Next we note that 
\[
\Pr(m(x,h)\geq 2)=\sum_{y\neq x}\Pr\left(\Delta\tau_x+\Delta\tau_y\leq h\right)\frac{\bra{y}G\ket{x}}{|\bra{x}G\ket{x}|},
\]
where $\Delta\tau_{x}, \Delta\tau_y$ are exponentially distributed  random variables with rates $|\bra{x}G\ket{x}|, |\bra{y}G\ket{y}|$ respectively. 
We have 
\[
\Pr\left(\Delta\tau_x+\Delta\tau_y\leq h\right) = 
\int_0^h dh_1 \int_0^{h-h_1} dh_2 |\la x|G|x\ra|e^{- |\la x|G|x\ra| h_1}
 |\la y|G|y\ra|e^{- |\la y|G|y\ra| h_2} = O(h^2).
\]
Thus $\Pr(m(x,h)\geq 2)=O(h^2)$ and
\begin{equation}
\Pr(m(x,h)=1)=|\bra{x}G\ket{x}|h+O(h^2).
\label{eq:onem}
\end{equation}
Eqs.~(\ref{eq:zerom},\ref{eq:onem}) imply that for every $x\in\{0,1\}^n$ and every $y\neq x$,
\[
P_{x,x}(h)=1-|\bra{x}G\ket{x}|h+O(h^2) \qquad \text{and} \qquad P_{x,y}(h)=\bra{y}G\ket{x}h+O(h^2).
\]
For every $\tau\geq 0$ and $x,y\in\{0,1\}^n$, we arrive at
\begin{align*}
P_{y,x}(\tau+h)&=P_{y,x}(\tau)P_{x,x}(h)+\sum_{z\neq x}P_{y,z}(\tau)P_{z,x}(h)\\
&=P_{y,x}(\tau)-|\bra{x}G\ket{x}|P_{y,x}(\tau)h+\sum_{z\neq x}P_{y,z}(\tau)\bra{z}G\ket{x}h+O(h^2).
\end{align*}
Hence, we conclude that
$$\bra{y}P'(\tau)\ket{x}=\lim_{h\rightarrow 0}\frac{P_{y,x}(\tau+h)-P_{y,x}(\tau)}{h}=\sum_{z\in\{0,1\}^n}\bra{y}P(\tau)\ket{z}\bra{z}G\ket{x}=\bra{y}P(\tau)G\ket{x}.$$
Solving this differential equation known as Kolmogorov's forward equation, we obtain the solution $P(s)=e^{Gs}$. This establishes the desired equivalence $\Pr(\xi(s)=x|\xi(0)=x_{in})=\bra{x}e^{Gs}\ket{x_{in}}=\pi_s(x)$.
\end{proof}

Let $m(x_{in}, t)$ be the number of {\em flips} performed by the algorithm,
that is, the number of times the function $\xi(s)$ changes its value.
One can easily check that the total number of calls
to the amplitude computation subroutine made at lines 5 and 11
 is at most 
$O(n^k)\cdot m(x_{in},t)$.
Thus the algorithm is efficient as long as the number of flips is not too large.
Our key observation is the following. 
\begin{lemma}[\bf Average number of flips]
Let $F_\od$ be  the off-diagonal part of the fixed-node Hamiltonian 
obtained from $F$ by setting to zero all diagonal matrix elements. Then 
\be
\label{flips}
\sum_{x_{in}} \pi(x_{in}) m(x_{in},t) = - t \la \psi |F_\od |\psi\ra.
\ee
\end{lemma}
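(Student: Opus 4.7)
The plan is to compute $\mathbb{E}[m(x_{in},t)]$ by integrating the instantaneous flip rate against the time-evolved distribution, then average over $x_{in}\sim\pi$ and exploit that $\pi$ is the steady state so that the time dependence drops out.

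First I would fix a starting state $x_{in}$ and note that, conditional on the chain being at state $x$ at time $s$, the time until the next flip is exponential with rate $|\langle x|G|x\rangle|$; equivalently, the probability of a flip occurring in the infinitesimal interval $[s,s+ds]$ equals $|\langle x|G|x\rangle|\,ds$. Summing over the current state with weight $\pi_s(x\mid x_{in}):=\langle x|e^{Gs}|x_{in}\rangle$ and integrating over $s\in[0,t]$ gives
\begin{equation}
\mathbb{E}[m(x_{in},t)]\;=\;\int_0^t\sum_{x\in\{0,1\}^n}|\langle x|G|x\rangle|\,\pi_s(x\mid x_{in})\,ds.
\end{equation}
This is a routine CTMC identity; I would either justify it by a direct conditioning-on-first-flip argument (Dynkin's formula for the counting process) or cite it as standard, since the bulk of the work lies elsewhere.

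Next I would average both sides against $\pi(x_{in})$. Because $\pi$ is the steady distribution of $e^{Gs}$ (Lemma \ref{lemma:gen}), we have $\sum_{x_{in}}\pi(x_{in})\pi_s(x\mid x_{in})=\pi(x)$ for every $s\ge 0$, and hence
\begin{equation}
\sum_{x_{in}}\pi(x_{in})\mathbb{E}[m(x_{in},t)]\;=\;t\sum_x|\langle x|G|x\rangle|\,\pi(x).
\end{equation}

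Finally I would evaluate the right-hand side using the explicit form of $G$. Since each column of $G$ sums to zero and off-diagonal entries are non-negative, $|\langle x|G|x\rangle|=\sum_{y\ne x}\langle y|G|x\rangle$. For $y\neq x$, the definition Eq.~\eqref{G} gives $\langle y|G|x\rangle=-\langle y|F|x\rangle\langle y|\psi\rangle/\langle x|\psi\rangle$. Plugging in and using $\pi(x)=\langle x|\psi\rangle^2$ (real amplitudes, by Lemma \ref{lem:real}) yields
\begin{equation}
\sum_x|\langle x|G|x\rangle|\pi(x)\;=\;-\sum_{x\neq y}\langle\psi|y\rangle\langle y|F|x\rangle\langle x|\psi\rangle\;=\;-\langle\psi|F_{\od}|\psi\rangle,
\end{equation}
which combined with the previous display gives Eq.~\eqref{flips}.

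The only mildly delicate step is the first one, namely formalizing that the expected flip count equals the integrated instantaneous rate; the rest is algebra. I would verify non-negativity of $-\langle\psi|F_{\od}|\psi\rangle$ en passant as a sanity check (each term in the sum is manifestly non-positive once the signs from the $S^-$ definition Eq.~\eqref{eq:fnode2} are tracked), which matches the interpretation as an expected count.
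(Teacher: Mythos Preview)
Your argument is correct and follows essentially the same route as the paper: compute the instantaneous flip rate $|\langle x|G|x\rangle|=\sum_{y\ne x}\langle y|G|x\rangle$, use stationarity of $\pi$ so that the time-$s$ marginal is $\pi$ for every $s$, and then integrate over $[0,t]$ to get $t\sum_x|\langle x|G|x\rangle|\pi(x)=-t\langle\psi|F_{\od}|\psi\rangle$. The only cosmetic difference is that you first write the expected-count formula for a fixed $x_{in}$ and then average, whereas the paper works directly with the path measure started from $\pi$; the content is the same.
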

We note that $\la \psi|F_\od |\psi\ra\le 0$ by definition of $F$, see Eq.~(\ref{fixed_node_def}).

 Eq.~\eqref{flips} is closely related to a known estimator for the off-diagonal part of the Hamiltonian in many-body simulations based on the continuous-time worldline quantum Monte Carlo method \cite{prokof1998exact} (see, e.g., Eq. (19) of \cite{smallgaps}).
\begin{proof}
Let $\rho(\xi)$ be the probability distributions over 
paths $\xi\, : \, [0,t]\to \{0,1\}^n$ generated by the algorithm
with the starting state $x_{in}$ sampled from $\pi(x_{in})$.
Since $\pi$ is the steady distribution of the considered Markov chain,
one infers that 
Gillespie's algorithm terminated at any intermediate time $s$ samples
the steady distribution, that is,
\be
\label{marginal_xi}
\mathrm{Pr}_{\xi \sim \rho}[ \xi(s)=x] = \pi(x)
\ee
for any fixed $s\in [0,t]$.
Now let us consider the probability $P_{x,s}(\xi)$ for the path $\xi$
with a fixed value $\xi(s)=x$ to 
contain a flip between time $s$ and $s + ds$
in the limit $ds \to 0$. This probability is
%\begin{align*}
$
P_{x,s}(\xi)ds  =  
\sum_{y\ne x} \la y|G|x\ra ds$.
%\end{align*}
Thus the average number of flips between time $s$ and $s + ds$
is given by
\be
\label{flip1}
P_s ds := 
\EE_{\xi \sim \rho}  P_{\xi(s),s} ds = \sum_{y\ne x} \la y|G|x\ra \pi(x)  ds
= - \la \psi |F_\od |\psi \ra   ds.
\ee
Here we used Eq.~(\ref{marginal_xi}) and the definitions of $F$ and $G$, see Eqs.~(\ref{fixed_node_def},\ref{G}).
Integrating Eq.~(\ref{flip1}) over $s\in [0,t]$ completes the proof.
\end{proof}
By definition of $F$, see Eq.~(\ref{fixed_node_def}), one has 
\begin{align}
-\la \psi |F_\od|\psi\ra & =\sum_{(x,y)\in S^-} |\la \psi |x\ra \la x|H|y\ra \la y|\psi\ra |
\le \sum_{x,y} |\la \psi|x\ra\la x|H|y\ra\la y| \psi\ra| \nonumber \\
&
\leq\frac{1}{2}\sum_{x,y}|\bra{x}H\ket{y}|(|\braket{x}{\psi}|^2+|\braket{y}{\psi}|^2) \nonumber \\
&=\sum_{x}|\braket{x}{\psi}|^2\sum_{y}|\bra{x}H\ket{y}|
\leq
\max_x \sum_y 
|\bra{x}H\ket{y}|
\leq n^k\lVert H\rVert.
\label{norm_od}
\end{align}
Here we noted that each row of $H$ contains at most $n^k$ non-zero matrix
elements and each matrix element has magnitude at most $\|H\|$.
Thus the average number of flips performed by Gillespie's algorithm
is at most $tn^k \|H\|$, assuming that the starting state $x_{in}$
is sampled from the steady distribution $\pi(x_{in})$.

\begin{proof}[Proof of Theorem \ref{thm:main}]
We shall use a truncated version of  Gillespie's algorithm
which terminates whenever the condition $\tau \ge t$  at line~8 is satisfied or the number of flips exceeds a cutoff value $4\epsilon^{-1}tn^k \|H\|$. 
In the latter case the truncated algorithm declares an error.
By Markov's inequality, there exists at least one starting
state $x^{\star}_{in}$ such that the algorithm errs with the probability at most $\epsilon/4$. The truncated Gillespie's algorithm makes at most $O(t\epsilon^{-1}n^{2k} \|H\|)$
calls to the amplitude computation subroutine
since each flip requires $O(n^k)$ amplitude computations.
Let us write $\tilde{\pi}_{t}(x)$ for the probability distribution sampled by the truncated algorithm, with some starting state $x_{in}$, and conditioned on the event that no error occurs. Let $E$ be the event $m(x_{in},t)\geq 4\epsilon^{-1}tn^k \|H\|$ and let $P(E,x_{in},t)$ be the probability of this event, i.e., the probability that the truncated Gillespie algorithm outputs an error. Then
\[
\pi_t(x)=\tilde{\pi}_t(x)(1-P(E,x_{in},t))+\eta_t(x)P(E,x_{in},t)
\]
where $\eta_t(x)$ is the probability that Gillespie's algorithm outputs $x$ conditioned on event $E$. From the above we get 
\begin{align*}
    \|\tilde{\pi}_{t}-\pi_t\|_1\leq\sum_{x}\left|\eta_t(x)-\tilde{\pi}_t(x)\right|\cdot P(E,x_{in},t)\leq 2P(E,x_{in},t).
\end{align*}
For the special starting state $x^{\star}_{in}$ we see that the distribution sampled by the truncated Gillespie algorithm is $\epsilon/2$-close to the distribution $\pi_t$ sampled by the Gillespie algorithm without truncation.

Finally, we use Theorem \ref{thm:main1} to choose the evolution time $t$ in Gillespie's algorithm large enough such that the
distribution $\pi_t(x)$ is $\epsilon/2$-close to the steady distribution $\pi(x)$ in total variation distance. In particular, we conclude that the distributions $\pi_t(x)=\la x|e^{Gt}|x_{in}\ra$ and $\pi(x)$ are $\epsilon/2$-close in the total 
variation distance for all $t\ge t_{\epsilon/2}$, where
\[
t_\epsilon = \frac1{\gamma} \log{\left( \frac1{\epsilon \sqrt{\pi(x_{in})}}\right)}\le  \frac1{\gamma} \log{\left( \frac1{\epsilon \sqrt{\pi^*}}\right)}
\]
This concludes the proof of Theorem \ref{thm:main}.
\end{proof}

\section{Application to the Haldane-Shastry model\label{sec:hs}}

In this section we further investigate the properties of the continuous-time Markov Chain (CTMC) based on the fixed node Hamiltonian. Here we focus on a specific example of a quantum spin system where the amplitudes of the ground state can be computed exactly and efficiently. We implement the CTMC in software and use it to approximately sample from the resulting probability distribution.

The example we consider is a system of $L\geq 2$ qubits interacting according to the Haldane-Shastry Hamiltonian

\begin{equation}
H=\sum_{1\leq i<j\leq L}\frac{X_iX_j+Y_iY_j+Z_iZ_j}{4\left(\frac{L}{\pi}\sin\frac{\pi(i-j)}{L}\right)^2}
\label{eq:hsh}
\end{equation}
as defined in, e.g., Refs.~ \cite{stephan2017full} and \cite{pai2020disordered}. This Hamiltonian describes a spin chain with periodic boundary conditions (i.e., a ring) and long-range two-qubit interactions that depend on the shortest distance between the qubits on the ring. The Hamiltonian Eq.~\eqref{eq:hsh} has a unique ground state $\psi$ with the following properties \cite{stephan2017full, pai2020disordered}. For every $x\in\{0,1\}^L$ such that $|x|=\frac{L}{2}$,
\begin{equation}
\braket{x}{\psi}\propto \prod_{k=1}^L (-1)^{(k-1)x_k}\prod_{1\leq i<j\leq L}\left(\sin\frac{\pi(i-j)}{L}\right)^{2x_ix_j}
\label{eq:hsground}
\end{equation}
where $\prod_{k=1}^L (-1)^{(k-1)x_k}$ determines the relative sign, and for every $x\in\{0,1\}^L$ such that $|x|\neq\frac{L}{2}$, $\braket{x}{\psi}=0$. It is also known \cite{stephan2017full} that the spectral gap of $H$ scales inversely  with  system size $L$, and in particular satisfies
\begin{equation}
\gamma=\frac{2\pi c}{L}
\label{eq:gamcl}
\end{equation}
for some constant $c$. 

We chose this example because the entries of the ground state Eq.~\eqref{eq:hsground}
are efficiently computable, and because we are not aware of alternative algorithms with a provable polynomial runtime for sampling from the probability distribution 
$\pi(x)=|\la x|\psi\ra|^2$.
For example, in Appendix \ref{sec:lack} we show that the state $\psi$ is not a fermionic Gaussian state and therefore cannot be directly sampled using a naive application of free-fermion based methods. One can easily check that 
Haldane-Shastry Hamiltonian $H$
has a sign problem (that is, $H$ is not stoquastic). 
Indeed, each two-qubit term  in $H$ has a positive matrix element between basis
states $10$ and $01$. 
Moreover, in Appendix~\ref{sec:lack1} we use 
techniques of Ref.~\cite{klassen2019two}
to show that the sign problem in $H$ cannot be ``cured" by a local change of basis, that is,
a Hamiltonian $U^\dag HU$ has a sign problem for any product unitary $U=U_1\otimes U_2\otimes \cdots \otimes U_L$.
This suggests that standard Quantum Monte Carlo methods applicable to sign problem free  Hamiltonians
cannot be directly used to sample the ground state of $H$. 
It should be noted that Haldane-Shastry Hamiltonian admits a 2D generalization
with an efficiently computable ground state amplitudes~\cite{nielsen2012laughlin}.
We expect that our sampling algorithm can be applied to this generalized model as well.

The first step in our method for sampling from the probability distribution $\pi(x)$  is to construct the associated fixed-node Hamiltonian defined by Eq.~\eqref{fixed_node_def}. Note that all matrix elements of $H$ and all entries of $\ket{\psi}$ are real numbers, so in this example there is no need to preprocess the Hamiltonian using Lemma \ref{lem:real}. The associated CTMC is generated by the matrix $G$ from Eq.~\eqref{G}. The mixing time of this CTMC scales inversely with the spectral gap  $\gamma_F$ of $F$ (see Eq.~\eqref{eq:gfbound}). We have shown that $\gamma_F\geq \gamma$ holds in general, which, combined with Eq.~\eqref{eq:gamcl} implies that the mixing time is upper bounded as $O(L)$. Figure \ref{fig:gap} shows the spectral gaps $\gamma$ and $\gamma_F$ for the Haldane-Shastry Hamiltonian and small values of $L$, computed using exact numerical diagonalization. We find empirically that $\gamma_F$ has a milder scaling with system size $L$ than $\gamma$ and we therefore expect the CTMC to mix more rapidly than the rigorous bounds suggest. 

\begin{figure}[t!]
\begin{subfigure}[h]{0.45\textwidth}
\includegraphics[scale=0.55]{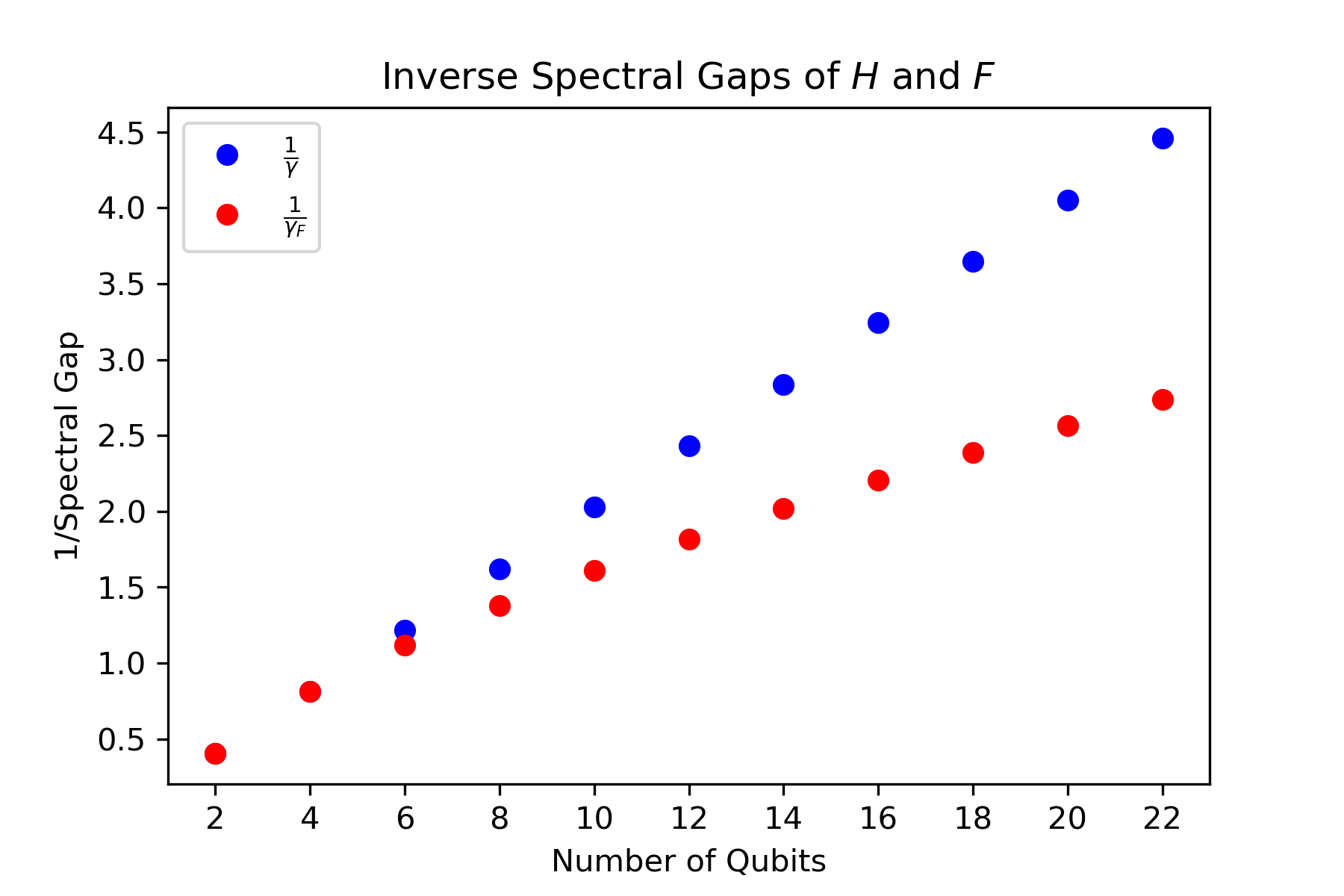}
\caption{}
\end{subfigure}
\hspace{1cm}
\begin{subfigure}[h]{0.45\textwidth}
\includegraphics[scale=0.55]{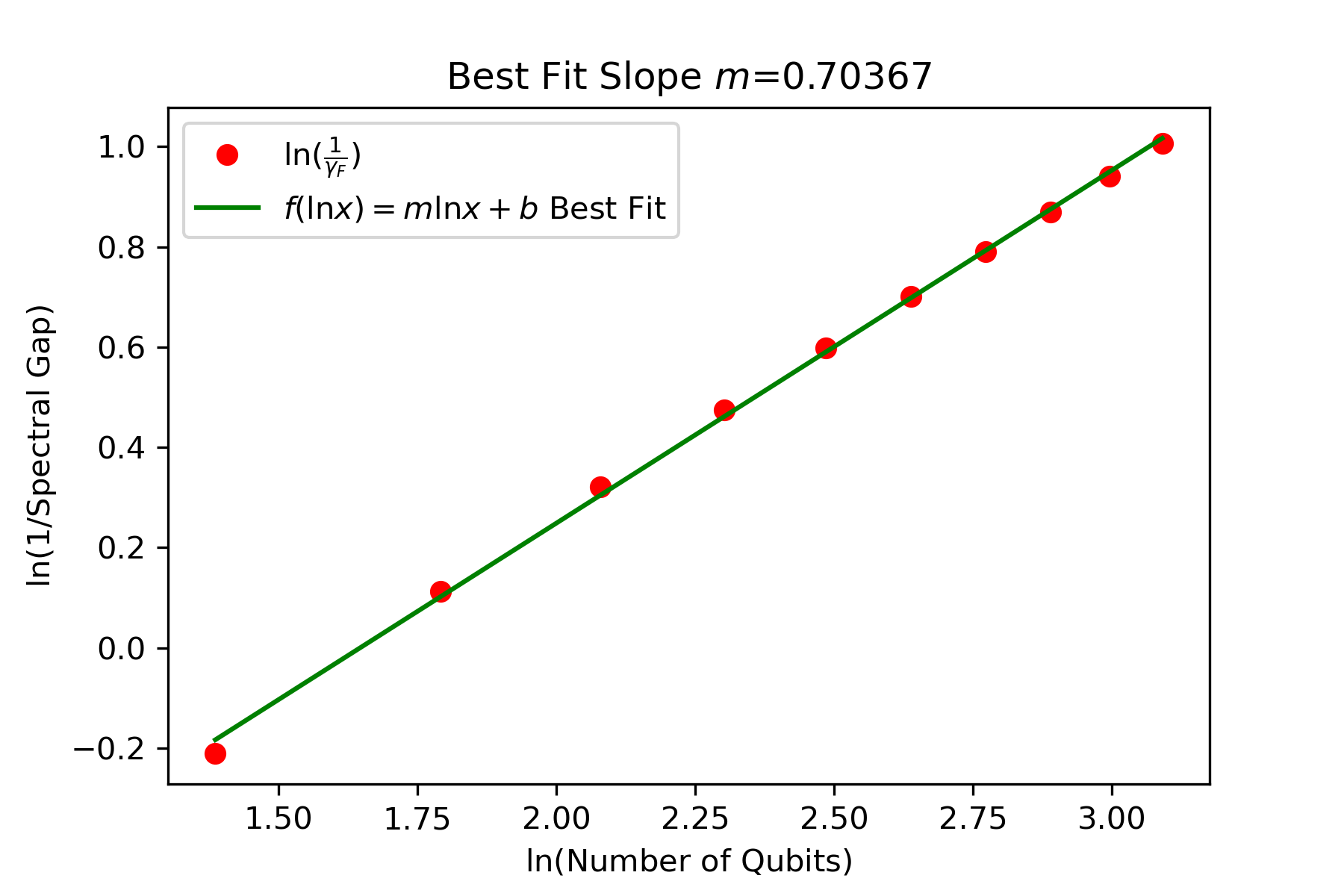}
\caption{}
\end{subfigure}
\caption{(a) The inverse of the spectral gap of the fixed-node Hamiltonian $F$ increases more slowly than that of the Haldane-Shastry Hamiltonian for small values of the number of qubits $L$ (computed using exact numerical diagonalization). (b) A linear fit for $L\geq 4$ on a log-log plot suggests a scaling of $\gamma_F\sim L^{-0.7\ldots}$.\label{fig:gap}}
\end{figure}

We implement Gillespie's algorithm using the fixed-node construction to approximately sample from the distribution $\pi(x)=|\braket{x}{\psi}|^2$, $x\in\{0,1\}^L$. The starting state is chosen randomly from $\text{Supp}(\ket{\psi})=\{x\in\{0,1\}^L:\braket{x}{\psi}\neq 0\}$. By our rigorous mixing time upper bound and numerically determined spectral gaps, the Markov chain is expected to converge  rapidly. To assess convergence numerically in practice, we use the vanilla $\hat{R}$ statistic discussed in \cite{vehtari2021rank} to select an appropriate ``burn-in'' period for our chain.

After allowing the CTMC to (approximately) converge, we are then able to estimate physical quantities in the ground state $\psi$. These can be compared with exact formulas that are available for the Haldane-Shastry ground state. For example, it is known from \cite{stephan2017full} that for every $1\leq i<j\leq L$, the two-point $ZZ$ correlation admits the formula
\begin{equation}
\bra{\psi}Z_iZ_j\ket{\psi}=\left(\sum_{k=1}^{L/2}\frac{\sin\frac{(2k-1)\pi(i-j)}{L}}{2k-1}\right)\left(\frac{(-1)^{i-j}}{2L\sin\frac{\pi(i-j)}{L}}\right).
\label{eq:twopoint}
\end{equation}
For every $d\in\{1,\ldots,L-1\}$, we consider a two-point correlator which is averaged over all pairs of spins at a distance $d$ on the ring. In particular, we let $M_d\equiv L^{-1}\sum_{i=1}^L Z_iZ_{\text{Mod}(i+d)}$ where $\text{Mod}(i+d)=((i+d-1)\text{ mod }L)+1$. Then the ground state satisfies
\begin{equation}
\bra{\psi}M_d\ket{\psi}=\bra{\psi}Z_1Z_{1+d}\ket{\psi}
\label{eq:md}
\end{equation}
and the exact value is therefore given by Eq.~\eqref{eq:twopoint}. We have tested our implementation of the CTMC by computing the expected value of $M_d$ for $d=1,5,10$. Since these observables are diagonal in the computational basis, they can be straightforwardly estimated from the output $\{\xi(\tau):0\leq \tau\leq T\}$ of Gillespie's algorithm. 
Here $T$ is the final time that determines the stopping condition of Gillespie's algorithm (see line 8). 
The time $T$ should not be confused with the number of flips, which is a random variable.
In particular, let $f:\{0,1\}^L\rightarrow\RR$ be such that $f(z)=\langle z|M_d|z\rangle$. Letting $\tau_0\leq T$ be an initial ``burn-in" time used to equilibrate the CTMC, we compute an approximation to $\Exp_{x\sim\pi}[f(x)]=\langle \psi|M_d|\psi\rangle$ using an estimator
\begin{equation}
\frac{1}{T-\tau_0}\int_{\tau=\tau_0}^{T} f(\xi(\tau))d\tau.
\label{eq:fest}
\end{equation}
(note that this integral can be equivalently be expressed as a finite sum since $\xi$ is piecewise constant). The data from our CTMC is shown in Fig.~\ref{fig:ctmctwopoint} and compared with the exact value computed using Eq.~\eqref{eq:twopoint}. See Appendix \ref{sec:details} for a description of the analysis used to estimate the error bars in this plot.

Next we compare our CTMC with the standard Metropolis-Hastings Markov Chain for sampling from the ground state probability distribution.
Fig.~\ref{eq:DTMC} shows the nearest-neighbor two-point correlator $M_1$ estimated using the MH method.  To use a Metropolis-Hastings Markov chain to sample from the ground state distribution $\pi$, we define the proposal distribution by viewing $H$ as the unweighted adjacency matrix of a transition graph with the state space $\{x\in\{0,1\}^L:|x|=\frac{L}{2}\}$. When the chain is in state $x$, it will propose to move to a neighbour $y$ of $x$ chosen uniformly at random. More explicitly, for every $x,y\in\{0,1\}^L$ such that $|x|=|y|=\frac{L}{2}$ and $x$ and $y$ differ in exactly two bits, the chain proposes to transition from $x$ to $y$ with probability $Q(y|x)=1/(L/2)^2=\frac{4}{L^2}$. The acceptance probabilities $A(y|x)=\min\{1,\frac{\pi(y)Q(x|y)}{\pi(x)Q(y|x)}\}$ are defined so that $Q$ and $A$ together satisfy the detailed balance condition. The fixed-node MH chain is defined by selecting a different proposal distribution, namely the one induced by viewing $F$ as an unweighted adjacency matrix.
More precisely, for a fixed state $x$ as above the proposal distribution $Q(y|x)$ is uniform
on the set of states $y$ with $|y|=\frac{L}{2}$ such that 
$\langle \psi|x\rangle \langle x|H|y\rangle \langle y|\psi\rangle \le 0$.
 We note that both MH chains are irreducible and aperiodic \footnote{Clearly the MH chain based on $H$ is irreducible. The one based on $F$ is also irreducible --otherwise $F$ would have a non-unique ground state, contradicting Lemma \ref{lemma:fixed_node}.   Both MH chains are aperiodic when $L\geq 4$. This can be seen to follow from the fact that they are irreducible and that there is at least one acceptance probability $A(y|x)$ which is less than one. The latter implies $\sum_y A(y|x)Q(y|x)<1$
and thus the self-loop probability for the state $x$
is strictly positive.} and therefore the distribution obtained by running the chain with any initial state converges to the limiting distribution $\pi$ as the number of steps grows.

\begin{figure}[H]
\centering
\begin{subfigure}{1\textwidth}
\centering
\includegraphics[scale=0.6]{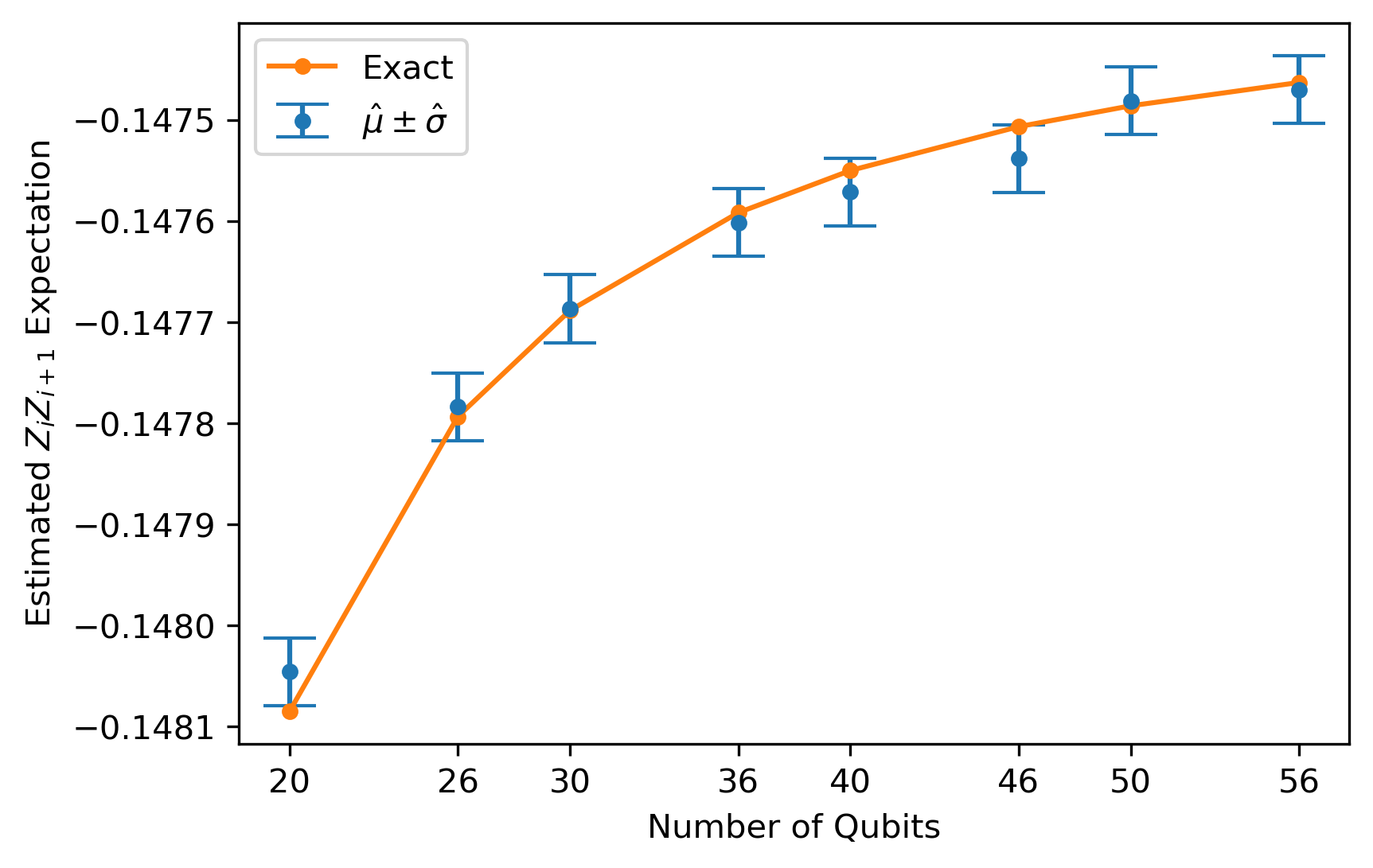}
\caption{}
\end{subfigure}

\vspace{0.1cm}
\begin{subfigure}{1\textwidth}
\centering
\includegraphics[scale=0.6]{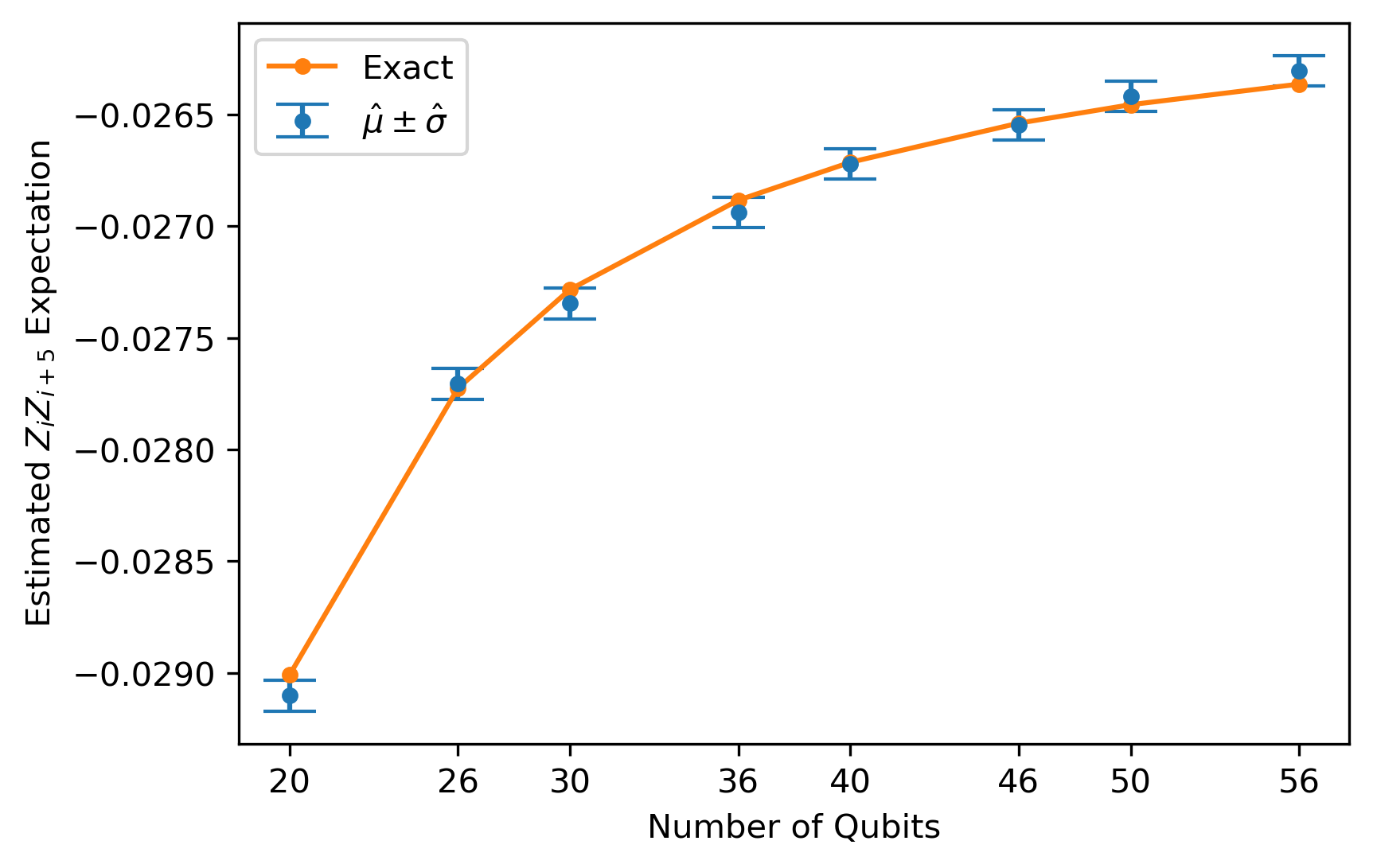}
\caption{}
\end{subfigure}

\vspace{0.1cm}
\begin{subfigure}{1\textwidth}
\centering
\includegraphics[scale=0.6]{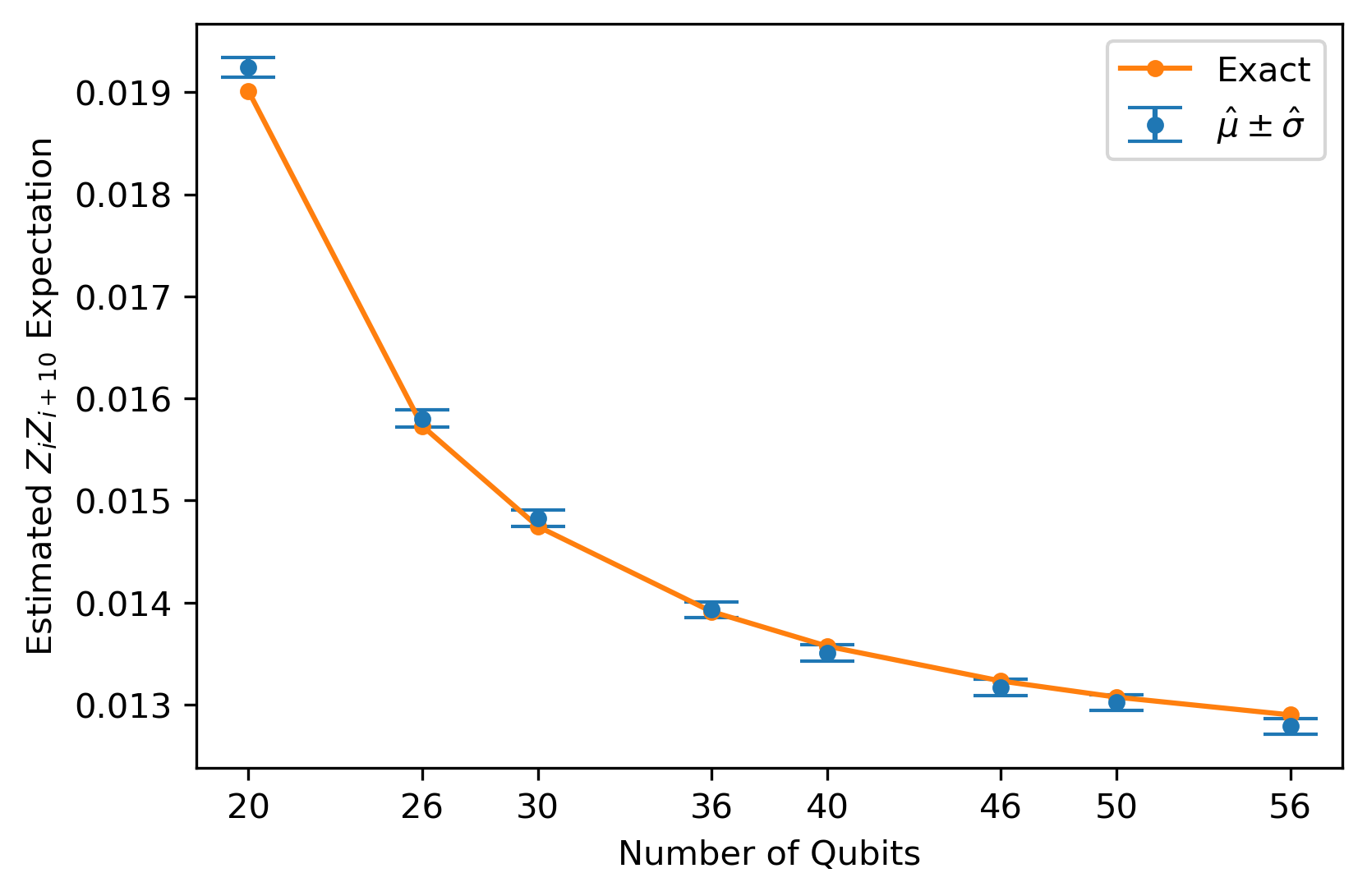}
\caption{}
\end{subfigure}
\caption{
Two-point correlation functions 
$\la \psi|Z_i Z_{i+d}|\psi\ra$ with
$d=1,5,10$
estimated using the continuous-time Markov Chain and compared with the exact formula Eq.~\eqref{eq:twopoint}. Here the CTMC was run for a total time $T=10^6$ and an initial ``burn-in" time $\tau_0=100$ was used for equilibration.
The estimator $\hat{\mu}$ is computed according to Eq.~\eqref{eq:fest} and the estimate of standard deviation
$\hat{\sigma}$ is constructed as described in 
Appendix~\ref{sec:details}.
\label{fig:ctmctwopoint}}
\end{figure}

\begin{figure}[H]
\begin{subfigure}{0.45\textwidth}
\includegraphics[scale=0.55]{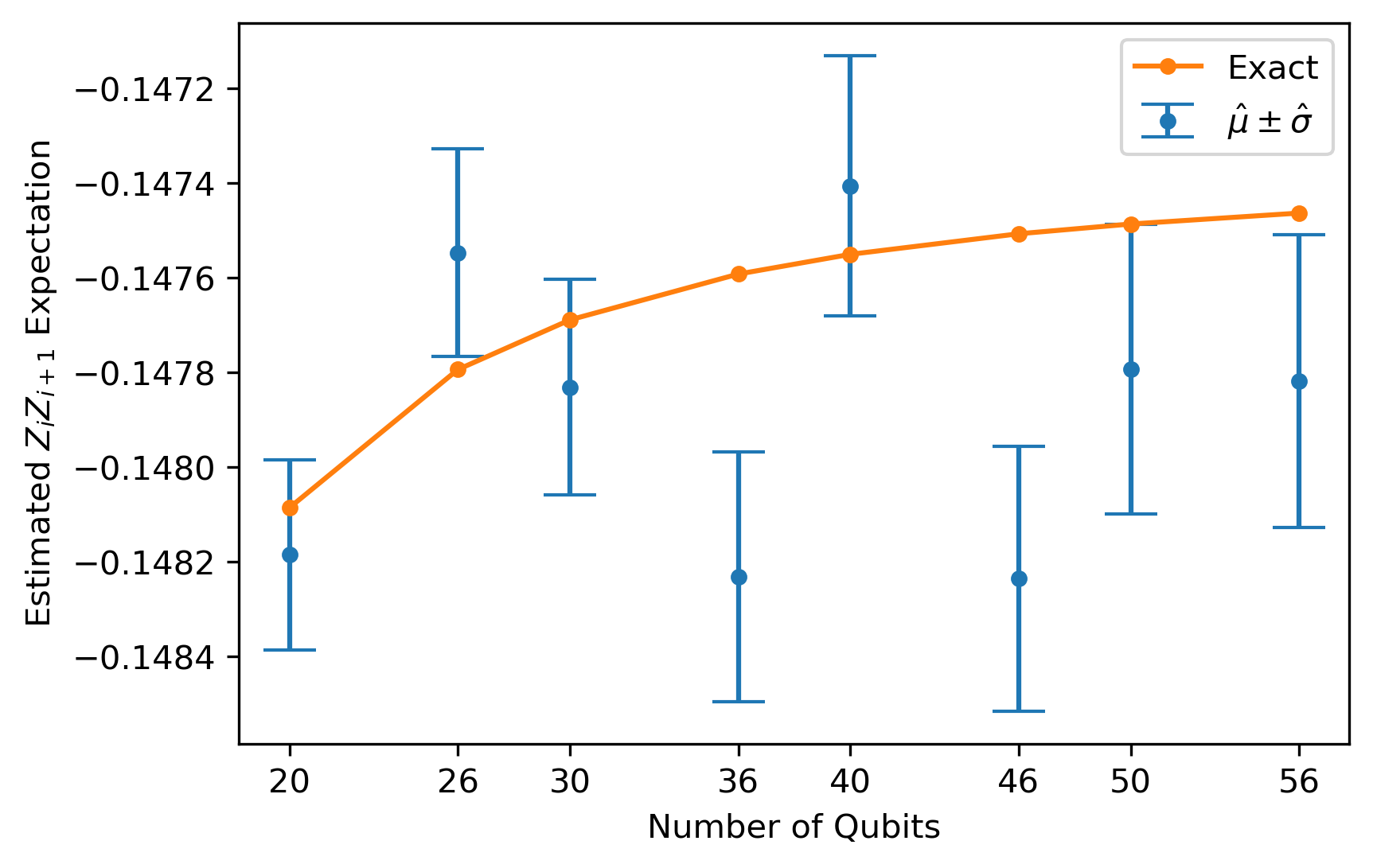}
\caption{}
\end{subfigure}
\hspace{1cm}
\begin{subfigure}{0.45\textwidth}
\includegraphics[scale=0.55]{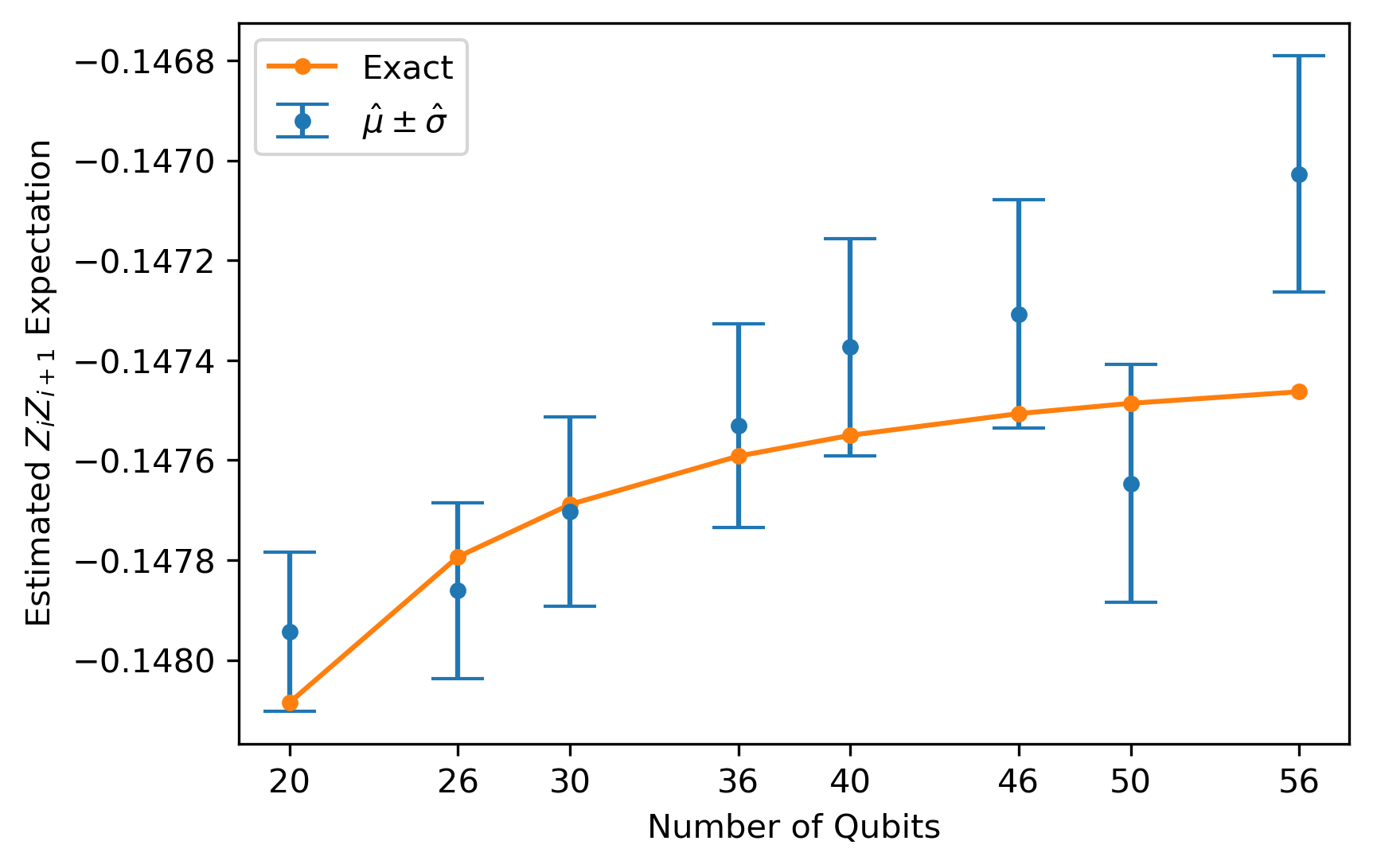}
\caption{}
\end{subfigure}
\caption{\label{eq:DTMC} Nearest-neighbor two-point correlator computed using a Metropolis-Hastings Markov Chain with $5\cdot 10^6$ steps of the Markov Chain  and $10^6$ used for equilibration. In (a) the Markov Chain proposal distribution was determined by $H$, and in (b) the proposal distribution was determined by $F$. }
\end{figure}

To enable a comparison between the Metropolis-Hastings Markov Chain and the CTMC we have tried to assess the computational cost of generating independent samples using each method.  For the MH chain, this is determined by the autocorrelation time as measured in the number of steps of the (discrete-time) Markov Chain. For the CTMC, we can also compute the autocorrelation time but the computational cost of running the chain for a given interval of time (using Gillespie's algorithm) is determined by the number of \textit{transitions} (i.e. flips) during the interval. For this reason we choose to normalize the autocorrelation time of the CTMC so that, roughly speaking, it is measured in units of transitions rather than time, see Appendix \ref{sec:details} for details.

\begin{figure}[H]
\centering
\begin{subfigure}{0.45\textwidth}
\includegraphics[scale=0.55]{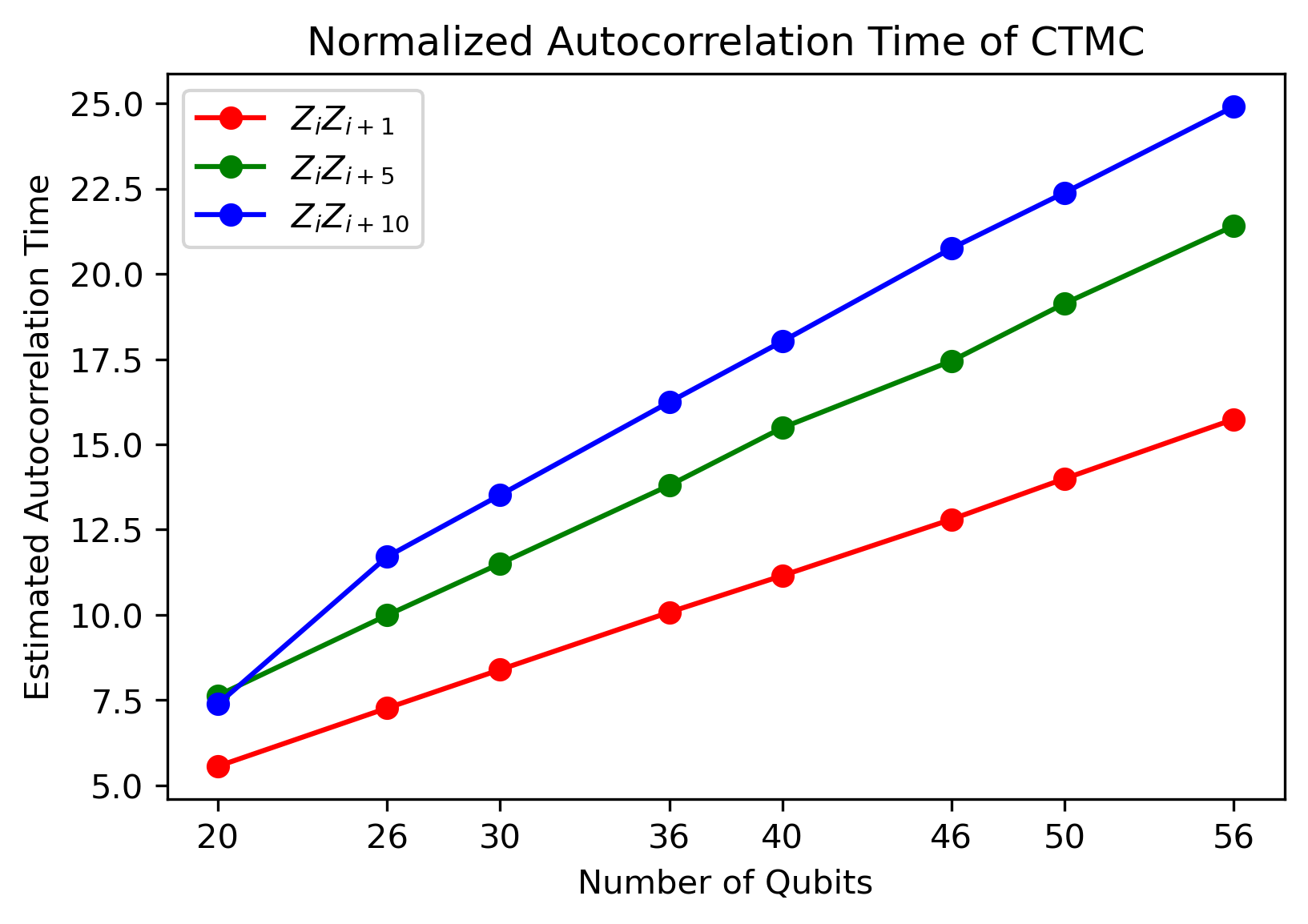}
\caption{}
\end{subfigure}
\hspace{1cm}
\begin{subfigure}{0.45\textwidth}
\includegraphics[scale=0.55]{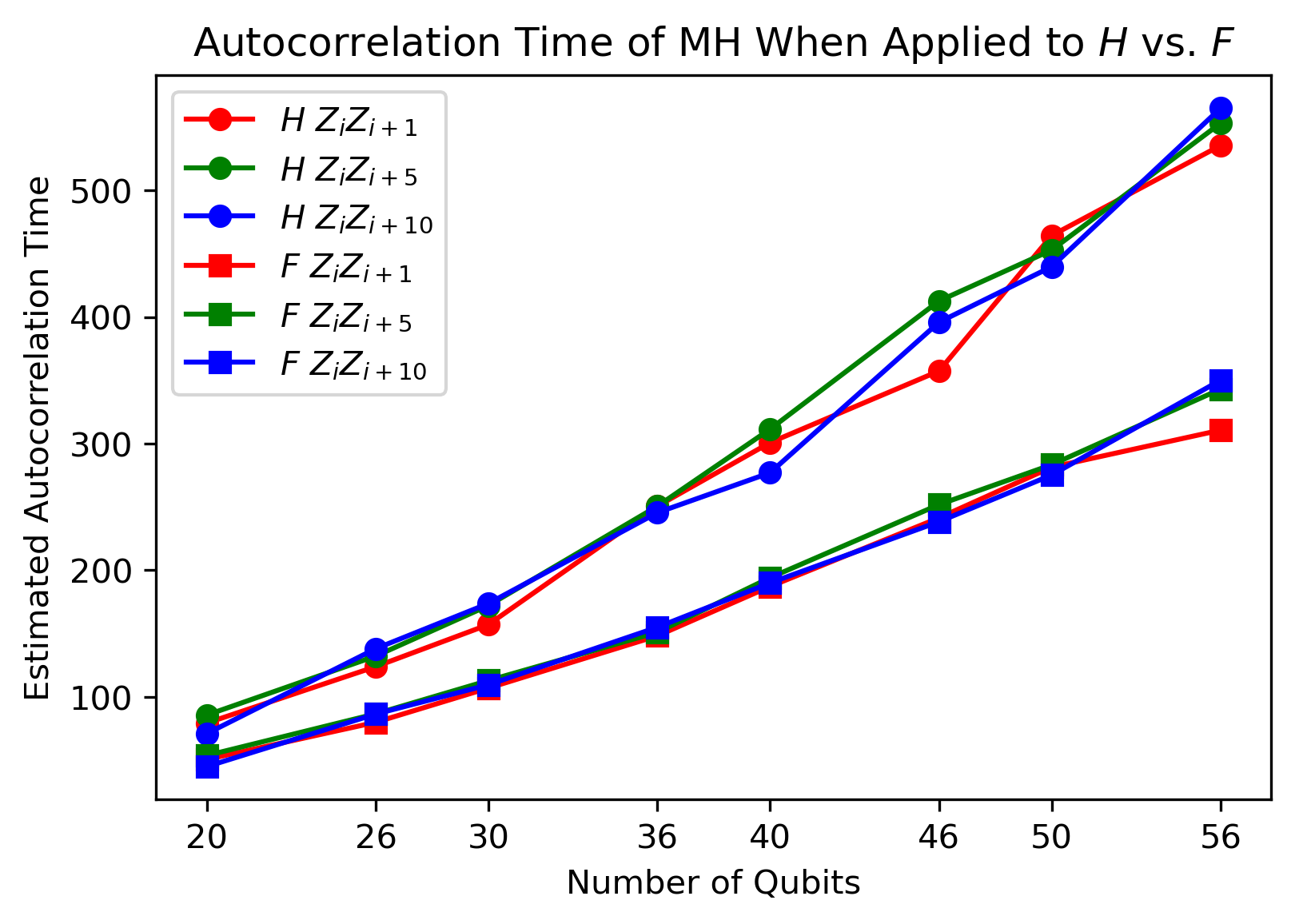}
\caption{}
\end{subfigure}
\caption{(a) Autocorrelation time of observables estimated using CTMC, measured in units of the number of transitions. (b) Autocorrelation time of observables estimated using Metropolis-Hastings with proposal distributions generated by either the Haldane-Shastry Hamiltonian $H$ or the associated fixed-node Hamiltonian $F$.\label{fig:auto}}
\end{figure}

We note that the normalized autocorrelation times in Fig.~\ref{fig:auto} (a) are significantly lower than the ones for the Metropolis-Hastings algorithm reported in Fig.~\ref{fig:auto} (b). Intriguingly, we also observe in Fig.~\ref{fig:auto} (b) that the Metropolis-Hastings algorithm using the fixed-node Hamiltonian $F$ has shorter autocorrelation times for each number of qubits and each of the three observables. In other words, using the fixed-node Hamiltonian to generate the proposal distribution appears to have a marked advantage over the naive strategy based on the Hamiltonian itself.  This is despite the fact that the nonzero matrix elements of $F$ are a subset of those of $H$. It appears that the information about the sign structure of $\psi$ that determines which entries of $F$ are set to zero may help the MH chain converge more quickly.

To investigate this further, we looked at how the autocorrelation times behave when we use a corrupted ground state distribution $\tilde{\pi}$ as opposed to the true ground state distribution $\pi$. For some noise strength $\kappa>0$, we construct a perturbed distribution $\tilde{\pi}$ by defining
\begin{equation*}
\braket{x}{\tilde{\psi}}=\braket{x}{\psi}+N\left(0,\frac{\kappa}{2^L}\right)
\end{equation*}
and setting $\tilde{\pi}(x)=|\braket{x}{\tilde{\psi}}|^2$ for every $x\in\{0,1\}^L$ where $N(\mu,\sigma)$ is a Gaussian random variable with mean $\mu$ and standard deviation $\sigma$. $\tilde{\pi}$ is normalized so that $\sum_{x\in\{0,1\}^L}\tilde{\pi}(x)=1$. Fig.~\ref{fig:corrupted} shows
that as the total variance distance between $\pi$ and $\tilde{\pi}$ increases, the differences between the autocorrelation times based on $H$ (Haldane-Shastry Hamiltonian) and $F$ (defined by Eqs.~(\ref{eq:fnode1},\ref{eq:fnode2},\ref{fixed_node_def}) with $\psi$ replaced by $\tilde{\psi}$)  disappear as expected.

\begin{figure}[H]
\centering
\includegraphics[scale=0.65]{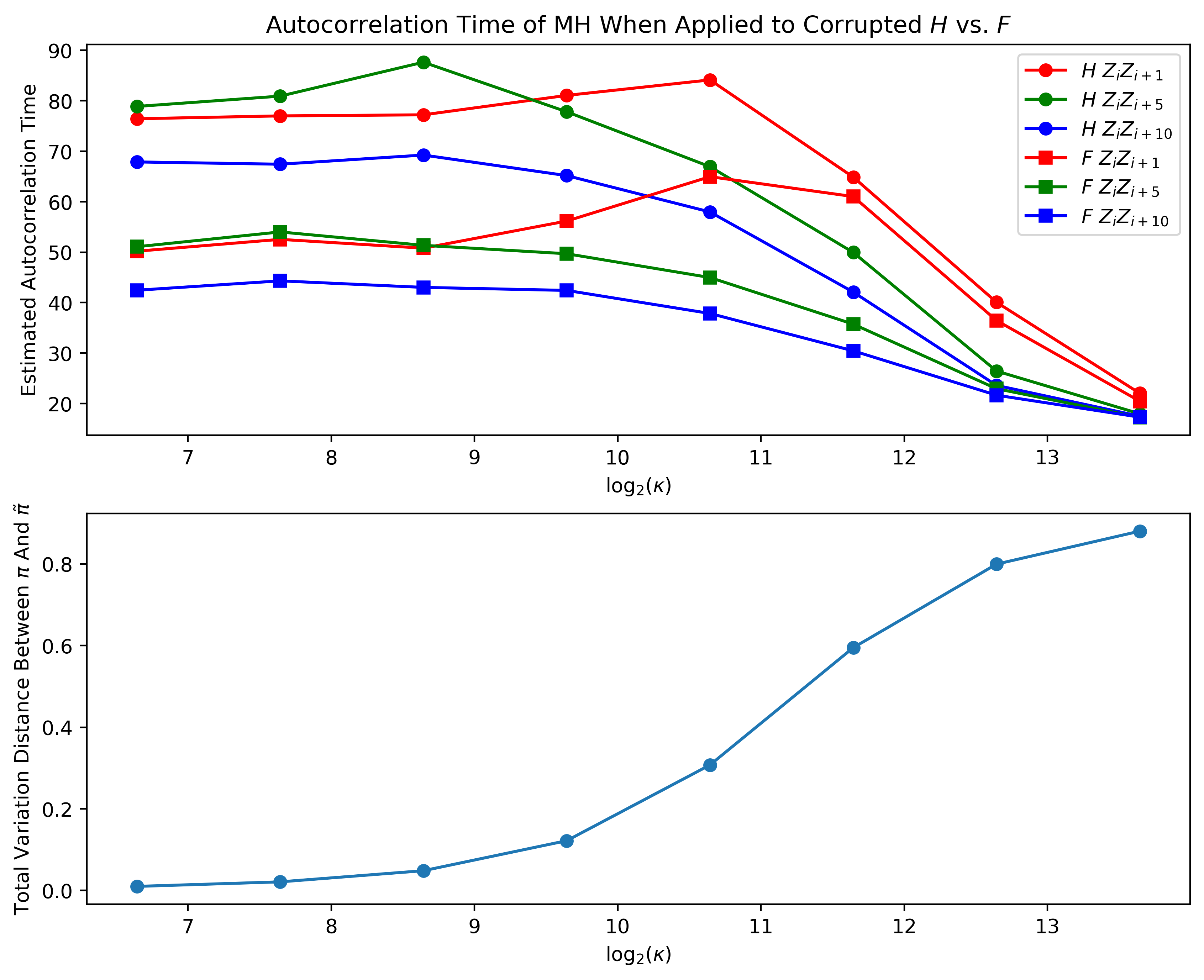}
\caption{Autocorrelation times of Metropolis-Hastings chains defined using corrupted ground state distributions. Each datapoint is computed using a length 1000000 chain with a 1000-sample burn in period, all for $L=20$. When the noise strength $\kappa$ is small, applying the fixed-node transformation reduces autocorrelation times for each of the chosen observables. As the error strength increases (as the corrupted distributions become dominated by noise), the differences between the autocorrelation times of F and H vanish.}
\label{fig:corrupted}
\end{figure}

\section{Acknowledgments}
SB thanks Vojtech Havlicek for helpful discussions. 
DG and YL acknowledge the support of the
Natural Sciences and Engineering Research Council of
Canada through grant number RGPIN-2019-04198. DG
also acknowledges the support of the Canadian Institute
for Advanced Research, and IBM Research. Research at
Perimeter Institute is supported in part by the Government of Canada through the Department of Innovation,
Science and Economic Development Canada and by the
Province of Ontario through the Ministry of Colleges and
Universities.
\appendix

\section{Haldane-Shastry state: lack of free fermion representation}
\label{sec:lack}
Given a bit string $x\in \{0,1\}^n$, let $|x|$ be the number of 1s in $x$ (the Hamming weight).
Suppose $n$ is an even integer.
Consider an $n$-qubit state $|\psi_n\ra$ with amplitudes
\be
\label{HSeq1}
\la x|\psi_n\ra = e^{i\theta_x} \prod_{1\le a<b\le n} \left[ \sin{\left( \frac{ \pi(a-b)}n \right)}\right]^{2x_a x_b}
\ee
if $|x|=n/2$ and $\la x|\psi_n\ra=0$ otherwise. Here $\theta_x\in \RR$ are arbitrary angles.
For example,
\be
\label{example4}
|\psi_4\ra = \frac{e^{i\theta_{1100}}} 2 |1100\ra + \frac{e^{i\theta_{0011}}}2 |0011\ra + \frac{e^{i\theta_{1001}}}2 |1001\ra +  \frac{e^{i\theta_{0110}}} 2 |0110\ra + e^{i\theta_{1010}}|1010\ra + e^{i\theta_{0101}} |0101\ra.
\ee
The state defined in Eq.~(\ref{HSeq1}) coincides with the Haldane-Shastry state 
for a suitable choice of the phase factors $e^{i\theta_x} = \pm 1$.

Define Majorana fermion operators $\gamma_1,\gamma_2,\ldots,\gamma_{2n}$ such that 
$\gamma_1=X_1$, $\gamma_2=Y_1$, 
\be
\gamma_{2p-1} = Z_1\cdots Z_{p-1} X_p \quad \mbox{and} \quad \gamma_{2p} = Z_1 \cdots Z_{p-1} Y_p
\ee
for $p=2,\ldots,n$. 
Recall that $n$-qubit state $|\phi\ra$ is called a free fermion state~\cite{terhal2002classical} if it obeys fermionic Wick's theorem, that is,
for any $k$-tuple of Majorana operators $1\le p_1<p_2<\ldots<p_k\le 2n$ one has
\be
\la \phi |\gamma_{p_1} \gamma_{p_2} \cdots \gamma_{p_{k}} |\phi\ra =\left\{
\ba{rcl}
0 &\mbox{if} & \mbox{$k$ is odd} \\
\calA\left(
\la \phi| \gamma_{p_1} \gamma_{p_2} |\phi\ra \cdot
\la \phi| \gamma_{p_3} \gamma_{p_4} |\phi\ra \cdots
  \la \phi| \gamma_{p_{k-1}} \gamma_{p_{k}}|\phi\ra\right) &\mbox{if} & \mbox{$k$ is even} \\
\ea\right.
\ee
Here $\calA$ denotes anti-symmetrization over all permutations of indices $p_1,p_2,\ldots,p_{k}$. 
For example,
\[
\la \phi|\gamma_1 \gamma_2 \gamma_3 \gamma_4|\phi\ra = \la \phi|\gamma_1\gamma_2|\phi\ra \cdot
\la \phi|\gamma_3\gamma_4|\phi\ra  - \la \phi|\gamma_1\gamma_3|\phi\ra \cdot
\la \phi|\gamma_2\gamma_4|\phi\ra + \la \phi|\gamma_1\gamma_4|\phi\ra \cdot
\la \phi|\gamma_2\gamma_3|\phi\ra.
\]
The task of sampling  the probability distribution $|\la x|\phi\ra|^2$ with a free fermion state $|\phi\ra$
admits an efficient classical algorithm with the runtime $O(n^3)$, see for instance~\cite{terhal2002classical}
or Appendix~B of~\cite{bravyi2018correcting}.
However, these free fermion simulation methods are not directly applicable to 
simulating measurement of the Haldane-Shastry state, as follows from the following lemma.
\begin{lemma}
Let $|\psi_n\ra$ be the state defined in Eq.~(\ref{HSeq1}) with an arbitrary choice of the angles $\theta_x$.
Suppose $n\ge 4$ is an integer multiple of four. Then $|\psi_n\ra$ is not proportional to a free fermion state.
\end{lemma}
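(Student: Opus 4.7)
The plan is to exploit the fact that $|\psi_n\rangle$ has definite particle number to reduce the question to one about Slater determinants, and then to rule out the Slater-determinant structure via a triangle-inequality applied to a three-term Pl\"ucker relation. First I would argue that any free fermion state proportional to $|\psi_n\rangle$ is in fact a Slater determinant: since every nonzero amplitude $\langle x|\psi_n\rangle$ satisfies $|x|=n/2$, the state (and any nonzero scalar multiple of it) lies in the fixed-particle-number sector with $N=n/2$ fermions under the Jordan--Wigner identification. A Gaussian state with definite particle number has vanishing anomalous two-point functions $\langle c_a c_b\rangle=0$, so it is particle-conserving and must be a single Slater determinant $|\phi\rangle = c^{\dagger}_{\phi_1}\cdots c^{\dagger}_{\phi_N}|0\rangle$. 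Its amplitudes in the occupation basis are the $N\times N$ determinants $\det(\phi_k(s_l))$ indexed by the occupied positions and hence obey the Pl\"ucker relations.

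Next I would write down the relevant three-term Pl\"ucker relation at four equispaced positions. Using $n\equiv 0\pmod 4$, set $p=1$, $q=n/4+1$, $r=n/2+1$, $s=3n/4+1$, and fix any bit string $A\subseteq[n]\setminus\{p,q,r,s\}$ of Hamming weight $n/2-2$. For $\{u,v\}\subseteq\{p,q,r,s\}$ write $M_{uv}:=\langle A\cup\{u,v\}|\psi_n\rangle$. Freezing the fermions in $A$ reduces the Slater determinant to a two-particle one on $\{p,q,r,s\}$, whose amplitude matrix is antisymmetric of rank at most $2$, so its Pfaffian vanishes. Translated back to the original amplitudes this yields
\[
\eta_1\,M_{pq}M_{rs}+\eta_2\,M_{pr}M_{qs}+\eta_3\,M_{ps}M_{qr}=0,
\]
for some $\eta_i\in\{\pm 1\}$ arising from Jordan--Wigner string signs and the sorting of $A\cup\{u,v\}$.

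The final step is a crude magnitude estimate. From Eq.~(\ref{HSeq1}), each $|M_{uv}|$ factors as $\sin^2(\pi(u-v)/n)$ times a product of sine-squared factors of the form $\prod_{a\in A}\sin^2(\pi(u-a)/n)\cdot\prod_{a\in A}\sin^2(\pi(v-a)/n)\cdot\prod_{a<b\in A}\sin^2(\pi(a-b)/n)$. Because each of the three products $|M_{pq}M_{rs}|,|M_{pr}M_{qs}|,|M_{ps}M_{qr}|$ uses each of $p,q,r,s$ exactly once, the $A$-dependent factors are identical across the three products and cancel in their ratio. What remains is
\[
|M_{pq}M_{rs}|:|M_{pr}M_{qs}|:|M_{ps}M_{qr}|=\sin^2\!\tfrac{\pi}{4}\sin^2\!\tfrac{\pi}{4}:\sin^2\!\tfrac{\pi}{2}\sin^2\!\tfrac{\pi}{2}:\sin^2\!\tfrac{3\pi}{4}\sin^2\!\tfrac{\pi}{4}=\tfrac14:1:\tfrac14.
\]
The Pl\"ucker identity together with the triangle inequality would force $|M_{pr}M_{qs}|\le|M_{pq}M_{rs}|+|M_{ps}M_{qr}|$, i.e.\ $1\le\tfrac12$, a contradiction. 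Hence no Slater determinant, and therefore no free fermion state, can be proportional to $|\psi_n\rangle$.

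The one delicate bookkeeping item is the precise form of the signs $\eta_i$ in the Pl\"ucker relation, which can depend on how $\{p,q,r,s\}$ interleave with the support of $A$ under Jordan--Wigner ordering. Fortunately these signs, together with the arbitrary phases $e^{i\theta_x}$ in Eq.~(\ref{HSeq1}), drop out of the triangle-inequality step, so the contradiction is robust for any choice of $A$ and any choice of the angles $\theta_x$.
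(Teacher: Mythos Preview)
Your argument is correct and is essentially the same proof as the paper's, repackaged in the language of Slater determinants and Pl\"ucker relations rather than the general free-fermion toolkit. Both proofs select the same four equispaced sites $\{1,n/4+1,n/2+1,3n/4+1\}$, invoke the same three-term quadratic identity among amplitudes (your Pl\"ucker relation is exactly the paper's Fact~1 specialized to the weight-two sector), observe that the $A$-dependent sine factors cancel from the three products (the paper encodes this cancellation as ``diagonal single-qubit operators $D_j$'' and invokes Fact~2), and reach the identical numerical contradiction $1\le \tfrac14+\tfrac14$. The only substantive difference is your preliminary reduction from general Gaussian states to Slater determinants via the fixed-particle-number observation; the paper avoids this by using the four-qubit free-fermion criterion (Fact~1), which holds for arbitrary Gaussian states without assuming number conservation. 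Both routes are equally valid and of comparable length.
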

\begin{proof}
We shall use the following
well-known facts.
\begin{fact}[\cite{bravyi2009contraction}]
\label{fact:fact1}
Suppose  a four-qubit state $|\phi\ra$ has support only on even-weight basis vectors.
Then $|\phi\ra$ is a free fermion state if and only if
\be
\label{ff4}
-\la 0000|\phi\ra \cdot \la 1111|\phi\ra + 
\la 1100|\phi\ra \cdot \la 0011|\phi\ra - \la 1010|\phi\ra \cdot \la 0101|\phi\ra + \la 1001|\phi\ra \cdot \la 0110|\phi\ra = 0.
\ee
\end{fact}
\begin{fact}
\label{fact:fact2}
Suppose $|\phi\ra$ is an $n$-qubit free fermion state and $D$ is a tensor product of diagonal single-qubit operators.
Then  $D|\phi\ra$ is proportional to a free fermion state.
\end{fact}
\begin{proof}
Suppose $D$ acts non-trivially on a single qubit $j$. Then $D$ is a linear combination of the identity $I$
and $Z_j=-i\gamma_{2j-1}\gamma_{2j}$.
            Thus $D\sim e^{\alpha \gamma_{2j-1}\gamma_{2j}}$ for some complex number $\alpha$.
In the general case $D$ is a product of operators as above. Thus we can write $D\sim e^{\Gamma}$,
where $\Gamma$ is some operator  quadratic in $\gamma_1,\ldots,\gamma_{2n}$.            
It is well known that matrix exponentials $e^{\Gamma}$ with a quadriatic fermionic operator $\Gamma$ map
free fermion states to free fermion states up to the normalization~\cite{flo}. 
\end{proof}
\begin{fact}[\cite{terhal2002classical,flo}]
\label{fact:fact3}
Any computational basis state $|x\ra$ is a free fermion state. 
A tensor product $\phi_1\otimes \phi_2$ is a free fermion state
iff $\phi_1$ and $\phi_2$ are free fermion states.
\end{fact}
Assume that  $|\psi_4\ra\sim |\phi\ra$ for some free fermion state $|\phi\ra$.
Then Eq.~(\ref{ff4}) gives
\be
|\la 1010|\psi_4\ra \cdot \la 0101|\psi_4\ra| \le | \la 1100|\psi_4\ra \cdot \la 0011|\psi_4\ra | + |  \la 1001|\psi_4\ra \cdot \la 0110|\psi_4\ra| 
\ee 
which contradicts to Eq.~(\ref{example4}). This proves the lemma for $n=4$.
Consider an  integer $n=4m\ge 4$.
Partition the set of $n$ qubits as $[n]=AB$ where 
\[
A=\{1,m+1,2m+1,3m+1\}
\]
and $B$ is the complement of $A$.
Given bit strings $x^A\in \{0,1\}^{|A|}$ and $x^B\in \{0,1\}^{|B|}$, let
$x^A x^B\in \{0,1\}^n$ be a string whose projection onto $A$ and $B$
coincides with $x^A$ and $x^B$ respectively. 
Suppose $|x^A|=2$ and $|x^B|=n/2 - 2$.
Then Eq.~(\ref{HSeq1}) implies
\be
\label{ABeq1}
\la x^A x^B|\psi_n\ra = e^{i\theta(x^A,x^B)}  \prod_{j=1}^4 f_j(x^A_j,x^B) \prod_{1\le q<r\le 4}  
\left[ \sin{\left( \frac{ \pi(q-r)}4 \right)}\right]^{2x^A_q x^A_r}
\ee
for some real-valued functions  $\theta(x^A,x^B)$ and $f_j(x^A_j,x^B)$.
Note that $\la x|\psi_n\ra\ne 0$ whenever $|x|=n/2$. Thus $f_j(x^A_j,x^B)\ne 0$
for all $x^A$ and $x^B$ as above.
From Eq.~(\ref{ABeq1}) one gets
\be
\label{ABeq2}
\la x^A x^B |\psi_n\ra = \la x^A | D_1 D_2 D_3 D_4|\psi_4\ra,
\ee
where  $|\psi_4\ra$ is defined in Eq.~(\ref{example4}) and $D_j$ are diagonal invertible single-qubit operators
such that $D_j|x^A\ra = f_j(x^A_j,x^B) |x^A\ra$ for all $x^A,x^B$.
Assume that $|\psi_n\ra$ is free. Then Fact~\ref{fact:fact2} implies that
$(I_A \otimes |x^B\ra\la x^B|)|\psi_n\ra$ is proportional to a free state.
Write
\[
(I_A \otimes |x^B\ra\la x^B)|\psi_n\ra  \sim |\psi^A(x^B)\ra \otimes |x^B\ra
\]
for some normalized four-qubit state $|\psi^A(x^B)\ra$.
Fact~\ref{fact:fact3} implies that  $|\psi^A(x^B)\ra$ is free.
From Eq.~(\ref{ABeq2}) one gets
\[
|\psi_4\ra \sim D_1^{-1} D_2^{-1} D_3^{-1} D_4^{-1} |\psi^A(x^B)\ra
\]
Using  Fact~\ref{fact:fact2} again one infers that $|\psi_4\ra$ is free.
However this is a contradiction since we have already proved that $|\psi_4\ra$ is not free.
Thus $|\psi_n\ra$ is not proportional to a free state.
\end{proof}

\section{Sign problem in Haldane-Shastry Hamiltonian}
\label{sec:lack1}

Recall that an $n$-qubit Hamiltonian $H$ is called  stoquastic (sign problem free) if $H$ has real
matrix elements in the standard basis and $\la x|H|y\ra\le 0$ for all $x\ne y$. 
Consider a Hamiltonian 
\be
\label{XYZ}
H = \sum_{1\le i<j\le n} J_{i,j} (X_i X_j + Y_i Y_j + Z_i Z_j),
\ee
where $J_{i,j}>0$ are arbitrary coefficients. This includes Haldane-Shastry Hamiltonian Eq.~(\ref{eq:hsh})
as a special case. Choose any basis vectors $x,y\in \{0,1\}^n$ that differ only on two qubits $i,j$
such that $x_ix_j=10$, $y_iy_j=01$, and $x_\ell=y_\ell$ for all $\ell\notin \{i,j\}$. 
A simple calculation gives
$\la x|H|y\ra = 2J_{i,j}>0$, that is, $H$ is not stoquastic. Suppose  there exist single-qubit unitary operators $U_1,U_2,\ldots,U_n$ 
such that 
\[
H_U := (U_1\otimes U_2 \otimes \cdots \otimes U_n)^\dag H (U_1\otimes U_2 \otimes \cdots \otimes U_n)
\]
is stoquastic. The question of whether Hamiltonians of the form Eq.~(\ref{XYZ}) can be
made stoquastic by a local change of basis has been studied by Klassen and Terhal~\cite{klassen2019two}.
Lemma~22 of Ref.~\cite{klassen2019two} implies that without loss of generality the unitaries $U_j$
can be chosen from a {\em finite}  subgroup of the unitary group (known as the Clifford group). 
Hence the number of {\em distinct} unitaries $U_j$ is upper bounded by a constant independent of 
$n$. Thus
 for a sufficiently large number of qubits $n$, there will be at least one pair of qubits $i< j$
such that $U_i=U_j$. Note that  $U_i\otimes U_j$ commutes with $X\otimes X + Y\otimes Y + Z\otimes Z$
if $U_i=U_j$.
Thus we can write 
\[
H_U = J_{i,j}  (X_i X_j + Y_i Y_j + Z_i Z_j) + H_{\mathrm{else}},
\]
where $H_{\mathrm{else}}$ is a sum of operators that act non-trivially on at most one of the qubits $i,j$.
Now the same calculation as above shows that 
$\la x|H_U|y\ra = 2J_{i,j}>0$, that is, $H_U$ is not stoquastic. 

\section{Details of numerical implementation\label{sec:details}}

In this appendix we describe how error bars are computed in Fig.~\ref{fig:ctmctwopoint}. We also describe the definition of the normalized autocorrelation time reported in Fig.~\ref{fig:auto}.

Let $\{\xi(\tau):\tau\geq 0\}$ denote the stochastic process induced by running Gillespie's algorithm on the HS fixed-node Hamiltonian where the initial state $\xi(0)\sim\pi$, the true ground state distribution. The following derivation is based on Ref.~\cite{kennedy2016notes}. Let $f:\{0,1\}^L\rightarrow\RR$ be a function and suppose our goal is to estimate the mean of $f$ with respect to the steady distribution $\pi$. Let $T>0$ be large and $h>0$ be small. We can estimate $\mu=\Exp_{x\sim\pi}[f(x)]$ using the estimator
$$\hat{\mu}=\frac{1}{\lfloor\frac{T}{h}\rfloor}\sum_{j=1}^{\lfloor\frac{T}{h}\rfloor} f(\xi(jh)).$$
Note that in the limit $h\rightarrow 0$ this can be represented as an integral (cf. Eq.~\eqref{eq:fest}). For the purposes of estimating error bars it will be convenient to use the discretized representation however.  Since $\xi(0)\sim\pi$, $\xi(\tau)\sim\pi$ for every $\tau\geq 0$. Let $\sigma^2_f=\Var(f(\xi(0)))=\Var(f(\xi(\tau)))$ for every $\tau\geq 0$. Recall that the Pearson correlation coefficient of two random variables $A$ and $B$ is defined by
$$\cor(A,B)=\frac{\cov(A,B)}{\sigma_A\sigma_B}$$
where $\sigma_A^2$ and $\sigma_B^2$ are the variances of $A$ and $B$ respectively. Then, the variance of the estimator $\hat{\mu}$ is
\begin{align*}
\hat{\sigma}^2=\Var(\hat{\mu})&=\frac{1}{\lfloor\frac{T}{h}\rfloor^2}\Var\left(\sum_{j=1}^{\lfloor\frac{T}{h}\rfloor} f(\xi(jh))\right)\\
&=\frac{1}{\lfloor\frac{T}{h}\rfloor^2}\sum_{i=1}^{\lfloor\frac{T}{h}\rfloor}\sum_{j=1}^{\lfloor\frac{T}{h}\rfloor} \cov(f(\xi(ih)),f(\xi(jh)))\\
&=\frac{\sigma^2_f}{\lfloor\frac{T}{h}\rfloor^2}\sum_{i=1}^{\lfloor\frac{T}{h}\rfloor}\sum_{j=1}^{\lfloor\frac{T}{h}\rfloor}\cor(f(\xi(ih)),f(\xi(jh))).
\end{align*}
We assume for every $i\in\{1,\ldots,\lfloor\frac{T}{h}\rfloor\}$, each correlation term in $\tau_f=\sum_{j=1}^{\lfloor\frac{T}{h}\rfloor}\cor(f(\xi(ih)),f(\xi(jh)))$ only depends on $|i-j|$ and is otherwise independent of $i$. Thus,
$$\hat{\sigma}^2=\frac{\sigma^2_f\tau_f}{\lfloor\frac{T}{h}\rfloor}.$$
$\tau_f$ is the integrated autocorrelation time w.r.t $f$. We use the emcee library \cite{foreman2013emcee} to obtain an estimate $\hat{\tau}_f$ of $\tau_f$. We further estimate $\hat{\sigma}^2$ using $\hat{\tau}_f$ and the sample variance of $\{f(\xi(1h)),f(\xi(2h)),\ldots,f(\xi(\lfloor\frac{T}{h}\rfloor h))\}$.

Notice that $\hat{\tau}_fh$ estimates the autocorrelation \textit{time} of the CTMC. Let $r$ denote the total number of transitions it took for the CTMC to reach \textit{time} $T$. Then $\frac{r}{T}$ gives the average number of transitions needed for the CTMC to advance \textit{time} by $1$ unit. Thus, we infer that it takes an average of $\hat{\tau}_fh\frac{r}{T}$ transitions for the CTMC to advance \textit{time} by $\hat{\tau}_fh$ units. In Figure \ref{fig:auto} we plot the normalized autocorrelation time $\hat{\tau}_fh\frac{r}{T}$ against the number of qubits.

\bibliographystyle{unsrt}
\bibliography{mybib}

\end{document}